\renewcommand{\paragraph}[1]{ \vspace{0.1cm} \noindent\textbf{#1}\hspace{0.1em}}
\newcommand{\idlow}[1]{\mathord{\mathcode`\-="702D\it #1\mathcode`\-="2200}}
\newcommand{\id}[1]{\ensuremath{\idlow{#1}}}
\newcommand{\litlow}[1]{\mathord{\mathcode`\-="702D\sf #1\mathcode`\-="2200}}
\newcommand{\lit}[1]{\ensuremath{\litlow{#1}}}
\newtheorem{lemma}{Lemma}
\newtheorem{theorem}{Theorem}
\newtheorem{claim}{Claim}
\newtheorem{corollary}{Corollary}
\newenvironment{proofT}[1]{\proof\def\toto{#1}}{\hspace*{\fill}$\Box_{Theorem~\ref{\toto}}$\par\vspace{3mm}}
\newenvironment{proofL}[1]{\proof\def\toto{#1}}{\hspace*{\fill}$\Box_{Lemma~\ref{\toto}}$\par\vspace{3mm}}
\newenvironment{proofC}{\proof}{\hspace*{\fill}$\Box_{Corollary~\ref{\toto}}$\par\vspace{3mm}}
\newcommand{\thmpostponed}[2]
{
\newcounter{#1}
\setcounter{#1}{\value{theorem}}
\begin{theorem}
\label{#1}
#2
\end{theorem}
\expandafter\def\csname #1\endcsname{
\newcounter{#1temp}
\setcounter{#1temp}{\value{theorem}}
\setcounter{theorem}{\value{#1}}
\begin{theorem}
#2
\end{theorem}
\setcounter{theorem}{\value{#1temp}}
}
\vspace{-0em}
}
\newcommand{\thmpostponedwname}[3]
{
\newcounter{#1}
\setcounter{#1}{\value{theorem}}
\begin{theorem}[#2]
\label{thm:#1}
#3
\end{theorem}
\expandafter\def\csname #1\endcsname{
\newcounter{#1temp}
\setcounter{#1temp}{\value{theorem}}
\setcounter{theorem}{\value{#1}}
\begin{theorem}[#2]
#3
\end{theorem}
\setcounter{theorem}{\value{#1temp}}
}
\vspace{-1.5em}
}
\newcommand{\lemmaproofpostponedwname}[4]
{
\newcounter{#1}
\newcounter{#1temp}
\setcounter{#1}{\value{lemma}}
\begin{lemma}[#2]
\label{#1}
#3
\end{lemma}
\expandafter\def\csname #1\endcsname{
\setcounter{#1temp}{\value{lemma}}
\setcounter{lemma}{\value{#1}}
\begin{lemma}
#3
\end{lemma}
\setcounter{theorem}{\value{#1temp}}
\begin{proofL}{#1}
#4
\end{proofL}
}
}
\newcommand{\thmproofpostponedwname}[4]
{
\newcounter{#1}
\newcounter{#1temp}
\setcounter{#1}{\value{theorem}}
\begin{theorem}[#2]
\label{#1}
#3
\end{theorem}
\expandafter\def\csname #1\endcsname{
\setcounter{#1temp}{\value{theorem}}
\setcounter{theorem}{\value{#1}}
\begin{theorem}
#3
\end{theorem}
\setcounter{theorem}{\value{#1temp}}
\begin{proofT}{#1}
#4
\end{proofT}
}
}
\newcommand{\lemmapostponed}[2]
{
\newcounter{#1}
\newcounter{#1temp}
\setcounter{#1}{\value{theorem}}
\begin{lemma}
\label{#1}
#2
\end{lemma}
\expandafter\def\csname #1\endcsname{
\setcounter{#1temp}{\value{theorem}}
\setcounter{theorem}{\value{#1}}
\begin{lemma}
#2
\end{lemma}
\setcounter{theorem}{\value{#1temp}}
}
\vspace{-1.5em}
}
\newcommand{\lemmapostponedwname}[3]
{
\newcounter{#1}
\newcounter{#1temp}
\setcounter{#1}{\value{theorem}}
\begin{lemma}[#2]
\label{#1}
#3
\end{lemma}
\expandafter\def\csname #1\endcsname{
\setcounter{#1temp}{\value{theorem}}
\setcounter{theorem}{\value{#1}}
\begin{lemma}[#2]
#3
\end{lemma}
\setcounter{theorem}{\value{#1temp}}
}
\vspace{-1.5em}
}
\newcommand{\lemmaproofpostponed}[3]
{
\newcounter{#1}
\newcounter{#1temp}
\setcounter{#1}{\value{lemma}}
\begin{lemma}
\label{#1}
#2
\end{lemma}
\expandafter\def\csname #1\endcsname{
\setcounter{#1temp}{\value{lemma}}
\setcounter{lemma}{\value{#1}}
\begin{lemma}
#2
\end{lemma}
\setcounter{lemma}{\value{#1temp}}
\begin{proofL}{#1}
#3
\end{proofL}
}
}
\newcommand{\corollaryproofpostponed}[3]
{
\newcounter{#1}
\newcounter{#1temp}
\setcounter{#1}{\value{theorem}}
\begin{corollary}
\label{#1}
#2
\end{corollary}
\expandafter\def\csname #1\endcsname{
\setcounter{#1temp}{\value{theorem}}
\setcounter{theorem}{\value{#1}}
\begin{corollary}
#2
\end{corollary}
\setcounter{theorem}{\value{#1temp}}
\begin{proofC}
#3
\renewcommand{\toto}{#1}
\end{proofC}
}
}
\newcommand{\toto}{xxx}
\newtheorem{definition}{Definition} % [section]
\newcommand{\todo}[1]{\typeout{TODO: #1}\textbf{[[[ #1 ]]]}}
\begin{document}
%!TEX root = paper.tex

%\title{Minimal Progress versus Maximal Progress\\ A New Approach for Analyzing Progress?\\ Almost Everywhere Progress Conditions?}
%\title{Are Lock-Free Concurrent Algorithms Practically Wait-Free?}
\title{Are Lock-Free Concurrent Algorithms Practically Wait-Free?}

\date{}

 \author{ Dan Alistarh \\ {\small MIT\thanks{alistarh@csail.mit.edu.}}
 \and Keren Censor-Hillel \\ {\small Technion\thanks{ckeren@cs.technion.ac.il. Shalon Fellow.}}
 \and Nir Shavit \\ {\small MIT \& Tel-Aviv University\thanks{shanir@csail.mit.edu.}} }

\maketitle

\begin{abstract}
Lock-free concurrent algorithms guarantee that \emph{some} concurrent operation will always make progress in a finite number of steps.
Yet programmers prefer to treat concurrent code as if it were \emph{wait-free}, guaranteeing that \emph{all} operations always make progress.
Unfortunately, designing wait-free algorithms is generally a very complex task, and the resulting algorithms are not always efficient.
While obtaining efficient wait-free algorithms has been a long-time goal for the theory community, most non-blocking commercial code is only lock-free.

This paper suggests a simple solution to this problem. 
We show that, for a large class of lock-free algorithms, under scheduling conditions which approximate those found in commercial hardware architectures, 
lock-free algorithms behave as if they are wait-free. In other words, programmers can keep on designing simple lock-free algorithms instead of complex wait-free ones, and in practice, they will get wait-free progress.

%This paper suggests a simple solution to this problem. We show that, for a large class of lock-free algorithms, and under reasonable scheduling assumptions (found in commercial hardware architectures), lock-free algorithms act as if they are wait-free. In other words, programers can keep on designing simple lock-free algorithms instead of complex wait-free ones, and, with the right assumptions about the underlying scheduler, they will get wait-free progress.

Our main contribution is a new way of analyzing a general class of lock-free algorithms under a \emph{stochastic scheduler}.
Our analysis relates the individual performance of processes with the global performance of the system using \emph{Markov chain lifting} between a
complex per-process chain and a simpler system progress chain.
We show that lock-free algorithms are not only wait-free with probability $1$,
but that in fact a general subset of lock-free algorithms can be closely bounded in terms of the average number of steps required until an operation completes. 
%A key ingredient in our analysis is a reduction to a novel \emph{iterated balls-into-bins game}. 

To the best of our knowledge, this is the first attempt to analyze progress conditions,
typically stated in relation to a worst case adversary, in a stochastic model capturing their expected asymptotic behavior.

\end{abstract}

%!TEX root = paper.tex
\section{Introduction}

\newcommand{\SCULong}{Single-Compare-and-Swap Universal}
\newcommand{\SCU}{SCU}

The introduction of multicore architectures as today's main computing platform has brought about a renewed interest in concurrent data structures and algorithms, and
a considerable amount of research has focused on their modeling, design and analysis.

The behavior of concurrent algorithms is captured by  \emph{safety properties}, which guarantee their correctness,
and \emph{progress properties}, which guarantee their termination.
Progress properties can be quantified using two main criteria.
The first is whether the algorithm is \emph{blocking} or \emph{non-blocking}, that is, whether the delay of a single process will cause others to be blocked,
preventing them from terminating. Algorithms that use locks are blocking, while algorithms that do not use locks are non-blocking.
Most of the code in the world today is lock-based, though the fraction of code without locks is steadily growing~\cite{HSBook}.

The second progress criterion, and the one we will focus on in this paper, is whether a concurrent algorithm guarantees \emph{minimal} or \emph{maximal progress} \cite{HS11}.
Intuitively, minimal progress means that \emph{some} process is always guaranteed to make progress by completing its operations,
while maximal progress means that \emph{all} processes always complete all their operations.

%%% put back next line because it is important to mention lock-free before mentioning wait-free -- Nir %%%%
Most non-blocking commercial code is \emph{lock-free}, that is, provides minimal progress without using locks~\cite{HS11, FraserThesis}. 
Most blocking commercial code is \emph{deadlock-free}, that is, provides minimal progress when using locks. 
Over the years, the research community has devised ingenious, technically sophisticated algorithms that provide \emph{maximal progress}:
such algorithms are either \emph{wait-free}, i.e. provide maximal progress without using locks\cite{Her91}, or \emph{starvation-free}\cite{Lamport74a}, 
i.e. provide maximal progress when using locks.
Unexpectedly, maximal progress algorithms, and wait-free algorithms in particular, are not being adopted by practitioners, 
despite the fact that the completion of all method calls in a program is a natural assumption that programmers implicitly make.

Recently, Herlihy and Shavit \cite{HS11} suggested that perhaps the answer lies in a surprising property of lock-free algorithms: in practice, they often behave as if they were wait-free (and similarly, deadlock-free algorithms behave as if they were starvation-free). Specifically, most operations complete in a timely manner, and the impact of long worst-case executions on performance
is negligible. In other words, in real systems, the scheduler that governs the threads' behavior in long executions does not single out any particular thread in order to cause the theoretically possible bad behaviors.
This raises the following question: 
could the choice of \emph{wait-free} versus \emph{lock-free} be based simply on what assumption a programmer is willing to make about the underlying scheduler, 
and, with the right kind of scheduler, one will not need wait-free algorithms except in very rare cases? 

This question is important because the difference between a wait-free and a lock-free algorithm for any given problem typically involves the introduction of 
specialized ``helping'' mechanisms~\cite{Her91}, which significantly increase the complexity (both the design complexity and time complexity) of the solution. If one could simply rely on the scheduler, 
adding a helping mechanism to guarantee wait-freedom (or starvation-freedom) would be unnecessary.

Unfortunately, there is currently no analytical framework which would allow answering the above question, since it would require predicting
the behavior of a concurrent algorithm over long executions, under a scheduler that is not adversarial.

\vspace{.3em}
\paragraph{Contribution.} In this paper, we take a first step towards such a framework. Following empirical observations, we introduce a  \emph{stochastic scheduler} model,
and use this model to predict the long-term behavior of a general class of concurrent algorithms. %\cite{erez?}.
The stochastic scheduler is similar to an adversary: at each time step, it picks some process to schedule.
The main distinction is that, in our model, the scheduler's choices \emph{contain some randomness}.
In particular, a stochastic scheduler has a probability threshold $\theta > 0$ such that every
(non-faulty) process is scheduled with probability at least $\theta$ in each step.

We start from the following observation:
under \emph{any stochastic scheduler}, every \emph{bounded lock-free} algorithm is actually \emph{wait-free with probability $1$}.
(A \emph{bounded} lock-free algorithm guarantees that \emph{some} process always makes progress within a finite progress bound.)
In other words, for any such algorithm, the schedules which prevent a process from ever making progress must have probability mass $0$.
The intuition is that, with probability $1$, each specific process eventually takes enough consecutive steps, implying that it completes its operation.
This observation generalizes to any bounded minimal/maximal progress condition~\cite{HS11}: we show that 
under a stochastic scheduler, bounded minimal progress becomes maximal progress, with probability $1$. 
%So, in particular, lock-based deadlock-free algorithms are also starvation-free with probability $1$. 
However, this intuition is insufficient for explaining why lock-free data structures are \emph{efficient} in practice: because it works for arbitrary algorithms,
the upper bound it yields on the number of steps until an operation completes is unacceptably high. 

Our main contribution is analyzing a general class of lock-free algorithms under a specific stochastic scheduler, and showing that not only are they wait-free with probability $1$,
but that in fact they provide a pragmatic bound on the number of steps until each operation completes. 

We address a refined \emph{uniform} stochastic scheduler,
which schedules each non-faulty process with uniform probability in every step.
Empirical data suggests that, in the long run, the uniform stochastic scheduler is a reasonable approximation
for a real-world scheduler (see Figures~\ref{fig:average} and~\ref{fig:steps}). 
We emphasize that we do not claim real schedulers are uniform stochastic, 
but only that such a scheduler gives a good approximation of what happens in practice for our complexity measures, over long executions. 

We call the algorithmic class we analyze \emph{single compare-and-swap universal} (SCU). 
An algorithm in this class is divided into a \emph{preamble}, and a \emph{scan-and-validate} phase.
The preamble executes auxiliary code, such as local updates and memory allocation.
In the second phase, the process first determines the data structure state by scanning the memory.
It then locally computes the updated state after its method call would be performed,
and attempts to commit this state to memory by performing an atomic \emph{compare-and-swap} (CAS) operation.
If the CAS operation succeeds, then the state has been updated, and the method call completes.
Otherwise, if some other process changes the state in between the scan and the attempted update,
then the CAS operation fails, and the process must restart its operation.

This algorithmic class is widely used to design lock-free data structures.
It is known that every sequential object has a lock-free implementation in this class using a lock-free version of Herlihy's universal construction~\cite{Her91}.
Instances of this class are used to obtain efficient data structures such as stacks~\cite{Treiber}, queues~\cite{MichaelS96}, or 
hash tables~\cite{FraserThesis}.
The read-copy-update (RCU)~\cite{RCU} synchronization mechanism employed by the Linux kernel is also an instance of this pattern.

We examine the  class $\SCU{}$ under a uniform stochastic scheduler,
and first observe that, in this setting, every such algorithm behaves as a Markov chain.
The computational cost of interest is \emph{system steps}, i.e.~shared memory accesses by the processes.
The complexity metrics we analyze are \emph{individual latency}, which is
the expected number of steps of the system until a specific process completes a method call,
and \emph{system latency}, which is the expected number of steps of the system
to complete \emph{some} method call.
We bound these parameters
by studying the stationary distribution of the Markov chain induced by the algorithm.

We prove two main results. The first is that, in this setting, all algorithms in this class have the property that the individual
latency of any process is $n$ times the system latency.
In other words, the expected number of steps for any two processes to complete an operation is \emph{the same};
moreover, the expected number of steps for the system to complete any operation is the expected number of steps for a specific process
to complete an operation, divided by $n$.
The second result is an upper bound of $O( q + s \sqrt n )$ on the system latency, where $q$ is the number of steps in the preamble,
$s$ is the number of steps in the scan-and-validate phase, and $n$ is the number of processes. This bound is asymptotically tight. 

The key mathematical tool we use is \emph{Markov chain lifting}~\cite{ChenLP99, HayesSinclair}.
More precisely, for such algorithms, we prove that
there exists a function which \emph{lifts} the complex Markov chain induced by the algorithm to a simplified \emph{system} chain.
The asymptotics of the system latency can be determined directly from the minimal progress chain. 
In particular, we bound system latency by characterizing the behavior of a new type of \emph{iterated} balls-into-bins game, 
consisting of iterations which end when a certain condition on the bins first occurs, after which some of the bins change their state and a new iteration begins. 
Using the lifting, we prove that the individual latency is always $n$ times the system latency.

In summary, our analysis shows that, under an approximation of the real-world scheduler, 
a large class of lock-free algorithms provide virtually the same progress guarantees as wait-free ones, and that, roughly, 
the system completes requests at a rate that is $n$ times that of individual processes.  
%An interesting aspect of 
%our framework is that it is ``composable:" we analyze separate code patterns for the preamble and for the scan-and-validate region; 
%the latency bounds for their sequential composition follow from the bounds on the components. 
More generally, it provides for the first time an analytical framework for predicting the behavior of a class of concurrent algorithms, over long executions, under a scheduler that is not adversarial.

\vspace{.3em}
\paragraph{Related work.} To the best of our knowledge, the only prior work which addresses a probabilistic scheduler for a shared memory environment 
is that of Aspnes~\cite{Aspnes2002}, who gave a fast  consensus algorithm  under a probabilistic scheduler model different from the one considered in this paper. 
The observation that many lock-free algorithms behave as wait-free in practice was made by Herlihy and Shavit in the context of formalizing 
minimal and maximal progress conditions~\cite{HS11}, and is well-known among practitioners. 
For example, reference~\cite[Figure $6$]{AlBahra} gives empirical results for the latency distribution of individual operations of a lock-free stack. 
Recent work by Petrank and Timnat~\cite{PT14} states that most known lock-free algorithms can be written in a canonical form, 
which is similar to the class $\id{SCU}$, but more complex than the pattern we consider. 
Significant research interest has been dedicated to transforming obstruction-free or lock-free algorithms to wait-free ones, e.g.~\cite{PT14, Kogan}, 
while minimizing performance overhead. In particular, an efficient strategy has been to divide the algorithm into a lock-free \emph{fast path}, and a wait-free 
\emph{backup path}, which is invoked it an operation fails repeatedly. 
Our work does not run contrary to this research direction, since the progress guarantees we prove are only probabilistic. Instead, 
it could be used to bound the cost of the backup path during the execution.

\vspace{.3em}
\paragraph{Roadmap.} We describe the model, progress guarantees, and complexity metrics in Section~\ref{sec:model}.  
In particular, Section~\ref{sec-stochastic-scheduler} defines the stochastic scheduler. 
We show that minimal progress becomes maximal progress with probability $1$ in Section~\ref{sec:LFtoWF}. 
Section~\ref{sec:algorithm} defines the class $\SCU(q, s)$, while Section~\ref{scan-validate} analyzes individual and global latency. 
%We conclude in Section~\ref{sec:conclusion}. 
The Appendix contains empirical justification for the model, and a comparison between the predicted 
behavior of an algorithm  and its practical performance.

%!TEX root = paper.tex

\section{System Model}
\label{sec:model}

\subsection{Preliminaries}

\paragraph{Processes and Objects.} We consider a shared-memory model,
in which $n$ processes $p_1, \ldots, p_{n}$, communicate through registers, on which they perform atomic \lit{read}, \lit{write}, and \lit{compare-and-swap} (CAS) operations.
A CAS operation takes three arguments $(R, \id{expVal}, \id{newVal})$, where $R$ is the register on which it is applied, $\id{expVal}$ is the expected value of the register,
and $\id{newVal}$ is the new value to be written to the register. If $\id{expVal}$ matches the value of $R$,
then we say that the CAS is successful, and the value of $R$ is updated to $\id{newVal}$. Otherwise, the CAS fails.
The operation returns \emph{true} if it successful, and \emph{false} otherwise.

We assume that each process has a unique identifier. Processes follow an algorithm,  composed of
shared-memory steps and local computation.
The order of process steps is controlled by the \emph{scheduler}.
A set of at most $n - 1$ processes may fail by crashing. 
A crashed process stops taking steps for the rest of the execution. 
A process that is not crashed at a certain step is \emph{correct}, 
and if it never crashes then it takes an infinite number of steps in the execution.

The algorithms we consider are implementations of shared objects.
A shared object $O$ is an abstraction providing a set of \emph{methods} $M$,
each given by its sequential specification.
In particular, an implementation of a method $m$ for object $O$ is a set of $n$ algorithms,
one for each executing process.
When process $p_i$ invokes method $m$ of object $O$,
it follows the corresponding algorithm until it receives a response from the algorithm.
In the following, we do not distinguish between a method $m$ and its implementation.
A method invocation is \emph{pending} if has not received a response.
A method invocation is \emph{active} if it is made by a \emph{correct} process (note that the process may still crash in the future).

% Shared object implementations use an underlying set of \emph{base objects}.
% Base objects are the primitives given by the shared memory, e.g. the hardware operations or the runtime environment.
% A shared object implementation may also employ other implementations of lower-level shared objects.

\paragraph{Executions, Schedules, and Histories.}
An execution is a sequence of operations performed by the processes.
To represent executions, we assume discrete time,
where at every time unit only one process is scheduled.
In a time unit, a process can perform any number of local computations or coin flips,
after which it issues a \emph{step}, which consists of a single shared memory operation.
Whenever a process becomes active, as decided by the scheduler, it performs its local computation and then executes a step.
The \emph{schedule} is a (possibly infinite) sequence of process identifiers.
If process $p_i$ is in position $\tau \geq 1$ in the sequence, then $p_i$ is active at time step $\tau$.

Raising the level of abstraction, we define a \emph{history} as a finite
sequence of method invocation and response events. Notice that each schedule has a corresponding history, in which
individual process steps are mapped to method calls. On the other hand, a history can be the image of several schedules.

\subsection{Progress Guarantees}

We now define minimal and maximal progress guarantees. 
We partly follow the unified presentation from~\cite{HS11}, except that we do not specify progress guarantees for each method of an object. 
Rather, for ease of presentation, we adopt the simpler definition which specifies progress provided by an implementation. 
Consider an execution $e$, with the corresponding history $H_e$.
An implementation of an object $O$ provides \emph{minimal progress} in the execution $e$ if,
in every suffix of $H_e$, some pending active instance of some method has a matching response.
Equivalently, there is no point in the corresponding execution from which all the processes take an infinite number of steps without returning from their invocation.

An implementation provides \emph{maximal} progress in an execution $e$ if, in every suffix of the corresponding history $H_e$, \emph{every}
pending active invocation of a method has a response.
Equivalently, there is no point in the execution from which a process takes infinitely many steps without returning.
%We say that an implementation provides \emph{minimal} (\emph{maximal}) progress if all its methods provide minimal (maximal) progress, respectively.

\paragraph{Scheduler Assumptions.}
We say that an execution is \emph{crash-free} if each process is always correct, i.e. if each process takes an infinite number of steps.
An execution is \emph{uniformly isolating} if, for every $k > 0$, every correct process has an interval where it takes at least $k$ consecutive steps.

\paragraph{Progress.} An implementation is \emph{deadlock-free} if it guarantees minimal progress in every crash-free execution, and maximal progress in some crash-free execution.\footnote{According to~\cite{HS11}, the algorithm is required to guarantee maximal progress in some execution to rule out pathological cases where a thread locks the object and never
releases the lock.}
An implementation is \emph{starvation-free} if it guarantees maximal progress in every crash-free execution.
An implementation is \emph{clash-free} if it guarantees minimal progress in every uniformly isolating history, and maximal progress in some such history~\cite{HS11}.
An implementation is \emph{obstruction-free} if it guarantees maximal progress in every uniformly isolating execution\footnote{This is the definition of obstruction freedom from~\cite{HS11}; it is weaker than the one in~\cite{HerlihyLM03} since it assumes uniformly isolating schedules only, but we use it here as it complies with our requirements of providing maximal progress.}.
An implementation is \emph{lock-free} if it guarantees minimal progress in every execution, and maximal progress in some execution.
An implementation is \emph{wait-free} if it guarantees maximal progress in every execution.

\paragraph{Bounded Progress.} While the above definitions provide reasonable measures of progress, often in practice more explicit progress guarantees may be desired, which provide an upper bound on the number of steps until some method makes progress.
To model this, we say that an implementation guarantees \emph{bounded minimal progress}
if there exists a bound $B > 0$ such that, for any time step $t$ in the execution $e$ at which there is an active invocation of some method,
some invocation of a method returns within the next $B$ steps by all processes.
An implementation guarantees \emph{bounded maximal progress} if there exists a bound
$B > 0$ such that \emph{every} active invocation of a method returns within $B$ steps by all processes.
We can specialize the  definitions of bounded progress guarantees
to the scheduler assumptions considered above to obtain definitions for \emph{bounded} deadlock-freedom, \emph{bounded} starvation-freedom, and so on.

% \paragraph{The Uniform Random Scheduler.} In the following, we
% will consider the following scheduler: at every time step, the scheduler picks one of the $n$ processes uniformly at random,
% and schedules its next step. Notice that this scheduler is fair with probability $1$.
% We can define restricted versions of this schedulers which model failures or uniform isolation.
%
% Notice that the uniform random scheduler induces a probability space $R = (\Omega, \mathcal{F}, \mu)$,
% where the elementary events are all possible execution prefixes, the $\sigma$-algebra $\mathcal{F}$ is given
% by sets of execution prefixes, and the probability measure $\mu$ is induced by the way in which the schedule is formed.
%
% We say that an implementation $I$ provides \emph{maximal progress almost everywhere}\todo{Keren: almost everywhere or almost always?} if the set of executions in which
% the implementation is does not provide maximal progress has probability $0$ under the probability space $R$.

\subsection{Stochastic Schedulers}
\label{sec-stochastic-scheduler}

%\subsubsection{Definition}

We define a stochastic scheduler as follows.

\begin{definition}[Stochastic Scheduler]
For any $n \geq 0$, a scheduler for $n$ processes is defined by a triple $(\Pi_\tau, A_\tau, \theta)$.
The parameter $\theta \in [0, 1]$ is the \emph{threshold}.
For each time step $\tau \geq 1$, $\Pi_\tau$ is a probability distribution
for scheduling the $n$ processes at  $\tau$, and $A_\tau$ is the subset of \emph{possibly active} processes at time step $\tau$.
At time step $\tau \geq 1$, the distribution $\Pi_\tau$ gives, for every $i \in \{1, \ldots, n\}$
a probability $\gamma^i_\tau$, with which process $p_i$ is scheduled.
The distribution $\Pi_\tau$ may depend on arbitrary outside factors, such as the current state of the algorithm being scheduled.
A scheduler $(\Pi_\tau, A_\tau, \theta)$ is \emph{stochastic} if $\theta > 0$. For every $\tau \geq 1$, the parameters must ensure the following:
\begin{enumerate*}
 \item (Well-formedness) $\sum_{i = 1}^n \gamma^i_\tau = 1$;
 \item (Weak Fairness) For every process $p_i \in A_\tau$, $\gamma^i_\tau \geq \theta $;
 \item (Crashes) For every process $p_i \notin A_\tau$, $\gamma^i_\tau = 0$;
 \item (Crash Containment) $A_{\tau + 1} \subseteq A_\tau$.
\end{enumerate*}

\end{definition}

%\todo{Do we later address these conditions explicitly? I think we use 1 but don't refer to it, and I'm not sure where we use the rest. We should talk explicitly about the crash case, for example.}
The well-formedness condition ensures that some process is always scheduled. Weak fairness ensures
that, for a stochastic scheduler, possibly active processes do get scheduled with some non-zero probability.
The crash condition ensures that failed processes do not get scheduled.
The set $\{p_1, p_2, \ldots, p_n\} \setminus A_\tau$ can be seen as the set of crashed processes at time step $\tau$,
since the probability of scheduling these processes at every subsequent time step is $0$.

\paragraph{An Adversarial Scheduler.} Any classic asynchronous shared memory adversary can be modeled by ``encoding''
its adversarial strategy in the probability distribution $\Pi_\tau$ for each step.
Specifically, given an algorithm $A$ and a worst-case adversary $\mathcal{A}_A$ for $A$, let $p_i^\tau$ be the process that
is scheduled by $\mathcal{A}_A$ at time step $\tau$.
Then we give probability $1$ in $\Pi_\tau$ to process $p_i^\tau$, and $0$ to all other processes.
Things are more interesting when the threshold $\theta$ is strictly more than $0$,
i.e., there is some randomness in the scheduler's choices. 

\paragraph{The Uniform Stochastic Scheduler.}
A natural scheduler is the \emph{uniform} stochastic scheduler, for which,  
%the probability threshold is $\theta = 1/n$. In other words, in each step, each process is scheduled with uniform probability $1 / n$.
assuming no process crashes, we have that $\Pi_\tau$ has $\gamma_i^\tau = 1 / n$, for all $i$ and $\tau \geq 1$,
and $A_\tau = \{1, \ldots, n\}$ for all time steps $\tau \geq 1$. With crashes, we have that $\gamma_i^{\tau}=1/|A_{\tau}|$ if $i \in A_{\tau}$, and $\gamma_i^{\tau}=0$ otherwise.

\subsection{Complexity Measures}

Given a concurrent algorithm, standard analysis focuses on two measures: \emph{step complexity},
the worst-case number of steps performed by a single process in order to return from a method invocation, and \emph{total step complexity}, or \emph{work},
which is the worst-case number of system steps required to complete invocations of all correct processes when performing a task together.
In this paper, we focus on the analogue of these complexity measures for long executions.
Given a stochastic scheduler, we define \emph{(average) individual latency} as the maximum over all inputs of the
expected number of steps taken by the system between the returns times of two consecutive invocations of the same process.
Similarly, we define the \emph{(average) system latency} as the maximum over all inputs of the expected number of system steps between consecutive returns times of any two invocations.

\section{Background on Markov Chains}
\label{sec:markov-chain-background}

We now give a brief overview of Markov chains. Our presentation follows standard texts, e.g.~\cite{Peres, Mitz}.
The definition and properties of Markov chain lifting are adapted from~\cite{HayesSinclair}.

%Let $T$ be a countable set, which denotes \emph{time}.
Given a set $S$, a sequence of random variables $(X_t)_{t \in \mathbb{N}}$, where $X_t \in S$, is a (discrete-time) \emph{stochastic process} with states in $S$.
A \emph{discrete-time Markov chain} over the state set $S$ is a discrete-time stochastic process with states in $S$ that satisfies the \emph{Markov condition}
$$\Pr[X_t = i_t | X_{t - 1} = i_{t - 1}, \ldots, X_0 = i_0 ] = \Pr[ X_t = i_t | X_{t - 1} = i_{t - 1} ].$$

\noindent The above condition is also called the \emph{memoryless property}.
A Markov chain is \emph{time-invariant} if the equality $\Pr[ X_t = j | X_{t - 1} = i ] = \Pr[ X_{t'} = j | X_{t' - 1} = i]$ holds for all times $t, t' \in \mathbb{N}$
and all $i, j \in S$. This allows us to define the \emph{transition matrix} $P$ of a Markov chain as the matrix with entries
$$ p_{ij} = \Pr[ X_{t} = j | X_{t - 1} = i ].$$
\noindent The \emph{initial distribution} of a Markov chain is given by the probabilities $\Pr[X_0 = i]$, for all $i \in S$.
We denote the time-invariant Markov chain $X$ with initial distribution $\lambda$ and transition matrix $P$ by $M( P, \lambda)$.

The random variable $T_{ij} = \min\{ n \geq 1|X_n = j, \textnormal{ if } X_0 = i\}$ counts the number of steps needed by the Markov chain to get from $i$ to $j$,
and is called the \emph{hitting time} from $i$ to $j$. We set $T_{i, j} = \infty$ if state $j$ is unreachable from $i$.
Further, we define $h_{ij} = E[T_{ij}]$, and call $h_{ii} = E[T_{ii}]$ the (expected) return time for state $i \in S$.

Given $P$, the transition matrix of $M(P, \lambda)$, a \emph{stationary distribution}  of the Markov chain is a state vector $\pi$ with $\pi =  \pi P$.
(We consider \emph{row} vectors throughout the paper.)
The intuition is that if the state vector of the Markov chain is $\pi$ at time $t$, then it will remain $\pi$ for all $t' > t$.
Let $P^{(k)}$ be the transition matrix $P$ multiplied by itself $k$ times, and $p_{ij}^{(k)}$ be element $(i, j)$ of $P^{(k)}$.
A Markov chain is \emph{irreducible} if for all pairs of states $i, j \in S$ there exists $m \geq 0$ such that $p_{ij}^{(m)} > 0$.
(In other words, the underlying graph is strongly connected.)
This implies that $T_{ij} < \infty$, and all expectations $h_{ij}$ exist, for all $i, j \in S$. Furthermore, the following is known.

\begin{theorem}
\label{thm:stat}
 An irreducible finite Markov chain has a unique stationary distribution $\pi$, namely
 $$\pi_j = \frac{1}{h_{jj}}, \forall j \in S.$$
\end{theorem}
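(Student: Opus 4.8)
The plan is to prove this classical result in two stages: first establish the existence and uniqueness of the stationary distribution $\pi$, then identify its entries with the reciprocals of the expected return times $h_{jj}$. The standard approach relies on the renewal/regenerative structure of the chain: fixing a reference state, I would count the expected number of visits to each state during one excursion from the reference state back to itself, normalize, and show this yields a stationary distribution. The irreducibility assumption guarantees all hitting times are finite, and the finiteness of $S$ guarantees the relevant expectations are finite, so the construction is well-defined.

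\textbf{Existence via expected visit counts.} First I would fix an arbitrary reference state $k \in S$ and define, for each $j \in S$, the quantity
$$\rho_j = E\!\left[\sum_{t=0}^{T_{kk}-1} \mathbf{1}\{X_t = j\} \,\Big|\, X_0 = k\right],$$
the expected number of visits to $j$ before the first return to $k$, where the excursion starts at $k$. By irreducibility and finiteness, $\rho_j$ is finite and positive for every $j$, and $\rho_k = 1$. The key step is to verify that the row vector $\rho = (\rho_j)_{j \in S}$ satisfies $\rho = \rho P$. This is a first-step-decomposition argument: one shows $\rho_j = \sum_{i \in S} \rho_i\, p_{ij}$ by conditioning each visit to $j$ on the state occupied one step earlier and using the Markov property together with time-invariance. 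The total mass is $\sum_{j} \rho_j = E[T_{kk} \mid X_0 = k] = h_{kk}$, since summing the expected visit counts over all states just counts the total length of the excursion. Normalizing, $\pi = \rho / h_{kk}$ is a stationary distribution, which establishes existence.

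\textbf{Uniqueness and the formula.} For uniqueness, I would argue that any stationary distribution $\pi$ must be strictly positive on all of $S$ (if $\pi_i = 0$ for some $i$, then irreducibility forces $\pi_j = 0$ for all $j$, contradicting $\sum_j \pi_j = 1$). Then, since the chain restricted to its single recurrent communicating class admits an essentially unique invariant measure up to scaling, the normalization condition $\sum_j \pi_j = 1$ pins it down uniquely. Finally, to extract the formula $\pi_j = 1/h_{jj}$, I would observe that the construction above gives $\pi_k = \rho_k / h_{kk} = 1/h_{kk}$ for the reference state $k$; since $k$ was arbitrary, the same identity $\pi_j = 1/h_{jj}$ holds for every $j \in S$.

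\textbf{Main obstacle.} The step I expect to require the most care is verifying the stationarity identity $\rho = \rho P$ rigorously, since it involves interchanging expectation with an infinite sum over excursion times and carefully handling the boundary contributions at the reference state $k$ (the visit at time $0$ versus the return at time $T_{kk}$). The bookkeeping must ensure that the contribution of state $k$ is counted exactly once, which is precisely what makes $\rho_k = 1$ consistent with the balance equations. Once this identity is checked, the remainder of the argument is routine normalization and a short positivity argument.
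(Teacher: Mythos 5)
The paper never proves this statement: Theorem~\ref{thm:stat} is quoted as known background (following standard texts such as the cited Markov chain references), so there is no internal argument to compare yours against. Your proposal is the classical excursion-measure proof, and its skeleton is correct: the vector $\rho_j = E\bigl[\sum_{t=0}^{T_{kk}-1} \mathbf{1}\{X_t = j\} \mid X_0 = k\bigr]$ is invariant, is finite and positive by irreducibility and finiteness of $S$, has total mass $h_{kk}$, and normalizing it gives a stationary distribution with $\pi_k = \rho_k/h_{kk} = 1/h_{kk}$; combined with uniqueness and the arbitrariness of $k$, this yields $\pi_j = 1/h_{jj}$ for all $j$. Your identification of $\rho = \rho P$ as the step needing careful bookkeeping is also apt.

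The one genuine weak point is the uniqueness step. As written, you justify it by asserting that the chain ``admits an essentially unique invariant measure up to scaling'' --- but that assertion \emph{is} the uniqueness half of the theorem, so the argument is circular as stated. Two standard ways to close it: (a) prove up-to-scaling uniqueness directly --- for any invariant measure $\lambda$ normalized so that $\lambda_k = 1$, a repeated first-step expansion shows $\lambda_j \geq \rho_j$ for every $j$; then $\lambda - \rho$ is a nonnegative invariant measure vanishing at $k$, and invariance gives $0 = (\lambda - \rho)_k \geq (\lambda-\rho)_j\, p^{(m)}_{jk}$ for every $j$ and $m$, so irreducibility (choosing $m$ with $p^{(m)}_{jk} > 0$) forces $\lambda = \rho$; or (b) for finite chains, a linear-algebra shortcut: the maximum principle plus irreducibility shows every column vector $h$ with $Ph = h$ is constant, so $\operatorname{rank}(P - I) = |S| - 1$ and the row kernel of $P - I$ is one-dimensional, which together with your positivity observation pins down a unique stationary distribution. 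Either patch is short; without one of them, the proposal has a real gap at its only nontrivial junction other than $\rho = \rho P$.
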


The periodicity of a state $j$ is the maximum positive integer $\alpha$ such that
$\{ n \in \mathbb{N} | p_{jj}^{(n)} > 0\} \subseteq \{i\alpha | i \in \mathbb{N}\}.$ A state with periodicity $\alpha = 1$ is called \emph{aperiodic}.
A Markov chain is \emph{aperiodic} if all states are aperiodic. If a Markov chain has at least one self-loop, then it is aperiodic.
A Markov chain that is irreducible and aperiodic is \emph{ergodic}.
Ergodic Markov chains converge to their stationary distribution as $t \rightarrow \infty$ independently of their initial distributions.

\begin{theorem}
For every ergodic finite Markov chain $(X_t)_{t \in \mathbb{N}}$ we have independently of the initial distribution that $\lim_{t\rightarrow \infty} q_t = \pi$,
where $\pi$ denotes the chain's unique stationary distribution, and $q_t$ is the distribution on states at time $t \in \mathbb{N}$. 
\end{theorem}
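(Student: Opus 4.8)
The plan is to prove this via a coupling argument, which is the most direct route given the tools already established. Convergence is measured in total variation distance, $\|q_t - \pi\|_{TV} = \frac{1}{2}\sum_{j \in S} |q_t(j) - \pi(j)|$, and the goal is to show this quantity tends to $0$ regardless of the initial distribution.

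First I would establish \emph{primitivity}: ergodicity (irreducibility together with aperiodicity) implies that there is an integer $N$ with $p_{ij}^{(t)} > 0$ for every $t \geq N$ and every pair $i, j \in S$. Fixing a state $j$, I consider the return-time set $R_j = \{ n \geq 1 : p_{jj}^{(n)} > 0\}$. This set is closed under addition, and aperiodicity says exactly that its $\gcd$ equals $1$. A standard number-theoretic fact then gives that $R_j$ contains every sufficiently large integer, so $p_{jj}^{(t)} > 0$ for all large $t$. Composing this with irreducibility (every $i$ reaches $j$ in finitely many steps) and using finiteness of $S$ yields a single threshold $N$ for which all entries of $P^N$ are strictly positive.

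Next I would run two coupled copies of the chain: $(X_t)$ started from the arbitrary initial distribution $q_0$ and $(Y_t)$ started from $\pi$, each evolving by $P$ and moving independently until the first time $\tau$ at which $X_\tau = Y_\tau$, after which I let them move together. Each marginal is then a faithful copy of the chain, so $X_t \sim q_t$, and, since $\pi$ is stationary by Theorem~\ref{thm:stat}, $Y_t \sim \pi$ for every $t$. Using the primitivity bound, within any block of $N$ consecutive steps the two chains coincide with probability at least some $\beta > 0$ that does not depend on their current states, since for each target state $j$ the joint probability of both landing in $j$ is at least $(\min_{i,k} p_{ik}^{(N)})^2 > 0$. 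Hence $\Pr[\tau > kN] \leq (1 - \beta)^k$, so $\Pr[\tau > t] \to 0$ geometrically. The coupling inequality $\|q_t - \pi\|_{TV} \leq \Pr[X_t \neq Y_t] = \Pr[\tau > t]$ then forces $\|q_t - \pi\|_{TV} \to 0$, which is precisely $q_t \to \pi$ independently of the initial distribution.

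The main obstacle is the first step: converting the purely combinatorial notion of aperiodicity into the quantitative primitivity statement that $P^N$ is entrywise positive. The delicate points are the number-theoretic lemma that an additively closed set of positive integers with $\gcd$ equal to $1$ omits only finitely many positive integers, and the verification that the threshold $N$ can be chosen \emph{uniformly} over all the finitely many pairs $(i,j)$. Once primitivity is secured, the coupling construction and the coupling inequality are routine, and the geometric decay of the coupling time immediately yields the claimed convergence.
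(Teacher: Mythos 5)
Your proof is correct, but there is nothing in the paper to compare it against: this theorem appears in Section~3 as background on Markov chains and is quoted without proof from the standard texts the presentation follows (e.g.\ the references cited there). What you have done is supply a self-contained proof of a fact the authors take for granted, and your route is exactly the classical Doeblin coupling argument found in those texts: first upgrade ergodicity (irreducibility plus aperiodicity) to primitivity, i.e.\ strict positivity of all entries of $P^N$ for some finite $N$, via the number-theoretic lemma that an additively closed set of positive integers with $\gcd$ equal to $1$ contains all sufficiently large integers, with the threshold made uniform over the finitely many pairs of states; then couple a copy of the chain started from $q_0$ with a copy started from $\pi$ (which stays distributed as $\pi$ by the paper's Theorem~\ref{thm:stat}), and bound $\|q_t - \pi\|_{TV}$ by the tail of the meeting time, which decays geometrically since each block of $N$ steps produces a meeting with probability at least $\beta \geq \bigl(\min_{i,k} p_{ik}^{(N)}\bigr)^2 > 0$. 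One detail to make explicit in a full write-up: in the glued coupling the two chains are independent only \emph{before} $\tau$, so the block estimate should be stated for the fully independent pair (where the product bound is exact) and then transferred to the glued process using the fact that the two constructions agree up to time $\tau$; the marginal of the glued process remains a faithful $P$-chain by the strong Markov property. This is routine, but it is the one place where your sketch, taken literally, elides a step.
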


\paragraph{Ergodic Flow.} It is often convenient to describe an ergodic Markov chain in terms of its \emph{ergodic flow}: for each (directed) edge $ij$,
we associate a flow $Q_{ij} = \pi_i p_{ij}$. These values satisfy $\sum_i Q_{ij} = \sum_i Q_{ji}$ and $\sum_{i,j} Q_{ij} = 1$. It also holds that
$\pi_j = \sum_i Q_{ij}$.

\paragraph{Lifting Markov Chains.} Let $M$ and $M'$ be ergodic Markov chains on finite state spaces $S, S'$, respectively.
Let $P, \pi$ be the transition matrix and stationary distribution for $M$, and $P', \pi'$ denote the corresponding objects for $M'$.
We say that $M'$ is a \emph{lifting} of $M$~\cite{HayesSinclair} if there is a function $f : S' \rightarrow S$ such that
$$ Q_{ij} = \sum_{x \in f^{-1}(i), y \in f^{-1}(j)} Q'_{xy}, \forall i, j \in S.$$

Informally, $M'$ is collapsed onto $M$ by clustering several of its states into a single state, as specified by the function $f$.
The above relation specifies a homomorphism on the ergodic flows. An immediate consequence of this relation is the following
connection between the stationary distributions of the two chains.
\begin{lemma}%[~\cite{HayesSinclair}~]
For all $v \in S$, we have that
  $$ \pi(v) = \sum_{x \in f^{-1} (v)} \pi'(x).$$
\end{lemma}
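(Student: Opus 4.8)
The plan is to express the stationary distribution entirely in terms of the ergodic flow and then apply the defining homomorphism of the lifting. The starting point is the identity $\pi(v) = \sum_{i \in S} Q_{iv}$, which holds for any ergodic chain: the stationary mass at a state equals its total incoming flow, exactly as recorded in the ergodic-flow paragraph above. This immediately reduces the claim to a statement purely about the flows $Q$ and $Q'$, so no further probabilistic machinery is needed.

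First I would substitute the lifting relation $Q_{iv} = \sum_{x \in f^{-1}(i),\, y \in f^{-1}(v)} Q'_{xy}$ into this sum, obtaining
$$\pi(v) = \sum_{i \in S} \sum_{x \in f^{-1}(i)} \sum_{y \in f^{-1}(v)} Q'_{xy}.$$
The essential observation is that, since $f$ is a function, the family $\{f^{-1}(i)\}_{i \in S}$ forms a partition of $S'$: every $x \in S'$ lies in exactly one block, namely $f^{-1}(f(x))$. Consequently the outer double summation over $i \in S$ and $x \in f^{-1}(i)$ collapses into a single unrestricted summation over $x \in S'$.

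After interchanging the order of summation (all sums are finite, so there is no subtlety) I would arrive at
$$\pi(v) = \sum_{y \in f^{-1}(v)} \sum_{x \in S'} Q'_{xy} = \sum_{y \in f^{-1}(v)} \pi'(y),$$
where the last equality again invokes the incoming-flow characterization of the stationary distribution, this time applied to $M'$. This is precisely the claimed identity.

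The argument is essentially bookkeeping, so I do not anticipate a genuine obstacle; the one point requiring care is the collapse of the $(i,x)$-summation, which relies on $f$ being a well-defined function on \emph{all} of $S'$. Note that surjectivity of $f$ is not needed: any state $i \in S$ with empty preimage simply contributes nothing to the sum, consistent with the fact that such an $i$ would carry no stationary mass. It is also worth remarking that the proof uses only the flow-level definition of lifting together with the two ergodic-flow identities $\pi(v)=\sum_i Q_{iv}$ and $\pi'(y)=\sum_x Q'_{xy}$, and therefore does not appeal to irreducibility or aperiodicity beyond what is already needed to guarantee that these stationary distributions exist and are given by the flow formulas.
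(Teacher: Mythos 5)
Your proposal is correct, and it is exactly the argument the paper has in mind: the paper states this lemma without proof, calling it ``an immediate consequence'' of the flow homomorphism, and your derivation---summing the lifting relation $Q_{iv} = \sum_{x \in f^{-1}(i),\, y \in f^{-1}(v)} Q'_{xy}$ over $i \in S$, collapsing the partition $\{f^{-1}(i)\}_{i\in S}$ of $S'$, and applying the incoming-flow identity $\pi_j = \sum_i Q_{ij}$ to both chains---is precisely the omitted bookkeeping. Nothing is missing; your remarks that surjectivity of $f$ is not needed and that only ergodicity (to guarantee the flow identities) is used are both accurate.
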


%\input{related-work.tex}
%!TEX root = paper.tex

\section{From Minimal Progress to Maximal Progress}
\label{sec:LFtoWF}

We now formalize the intuition that, under a stochastic scheduler, all algorithms ensuring bounded minimal progress
guarantee in fact maximal progress with probability $1$.
We also show the \emph{bounded} minimal progress assumption is necessary: if minimal progress is not bounded, then maximal progress may not be achieved.
%For the first part, our goal is only to get almost everywhere wait-freedom albeit with a very large bound. Improving this bound will be addressed in Section~\ref{sec:algorithm}.

\begin{theorem}[Min to Max Progress]
 \label{minmax}
  Let $\mathcal{S}$ be a stochastic scheduler with probability threshold $1 \geq \theta > 0$.
 Let $A$ be an algorithm ensuring bounded minimal progress with a bound $T$. Then $A$ ensures maximal progress with probability $1$.
 Moreover, the expected maximal progress bound of $A$ is at most $(1 / \theta)^T$.
\end{theorem}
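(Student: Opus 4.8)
The plan is to fix an arbitrary non-faulty process $p_i$ holding a pending active invocation at some time $t_0$, and to show that this invocation returns with probability $1$ while bounding the expected number of steps until it does. Since maximal progress with probability $1$ is exactly the assertion that (almost surely) every pending active invocation eventually returns, it suffices to establish this for an arbitrary $p_i$ and invocation and then take a union bound over the countably many invocations of the execution; invocations of processes that crash need not be considered, as they cease to be active.

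The crux is a purely deterministic observation that converts the \emph{minimal} progress guarantee into a \emph{per-process} one: if $p_i$ ever takes $T$ consecutive system steps while its invocation is still pending, that invocation must return during this window. Indeed, at the first step of such a window there is an active invocation, namely $p_i$'s, so bounded minimal progress guarantees that \emph{some} invocation returns within the next $T$ system steps; but all $T$ of these steps are taken by $p_i$ alone, and a response event requires the responding process to take a step, so the only invocation that can possibly return is $p_i$'s own. I would isolate this as a short claim proved directly from the definitions of bounded minimal progress and of a response event.

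For the probabilistic part, I would partition the steps following $t_0$ into disjoint blocks of $T$ consecutive system steps, calling a block \emph{good} if all $T$ of its steps are assigned to $p_i$. By weak fairness, at every step $p_i$ (which is non-faulty, hence $p_i \in A_\tau$ for all $\tau$) is scheduled with probability at least $\theta$ \emph{conditioned on the entire past}; therefore, conditioned on the history up to the start of any block, that block is good with probability at least $\theta^T$. By the claim above, the first good block forces $p_i$'s invocation to return, if it has not done so already. Hence the number of blocks until the first good one is stochastically dominated by a geometric random variable with success probability $\theta^T$, which is finite with probability $1$ and has expectation at most $(1/\theta)^T$. This gives maximal progress with probability $1$ and bounds, in units of the minimal-progress bound $T$, the expected number of rounds until completion by $(1/\theta)^T$.

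The main obstacle is handling the adaptivity of the scheduler correctly: because $\Pi_\tau$ may depend on the current algorithm state, the per-block success events are \emph{not} independent, so I cannot simply multiply probabilities across blocks. The clean way around this is to phrase everything as stochastic domination by a geometric law, invoking weak fairness in its conditional form, $\Pr[\,p_i \text{ scheduled at } \tau \mid \text{past}\,] \geq \theta$, so that the conditional probability of a good block is uniformly at least $\theta^T$ regardless of history; the geometric tail bound then goes through without any independence assumption. A secondary point to get right is the crash bookkeeping: restricting attention to a non-faulty $p_i$ is precisely what keeps $p_i \in A_\tau$, and hence the per-step probability bounded below by $\theta$, at every step of the (finite, random) waiting time.
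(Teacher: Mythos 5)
Your proposal is correct and follows essentially the same route as the paper's proof: the same deterministic key claim (if a process takes $T$ consecutive steps while its invocation is pending, bounded minimal progress forces that invocation to return, since only the stepping process can generate a response), followed by the same probabilistic argument that a window of $T$ consecutive steps by $p_i$ occurs with probability at least $\theta^T$ per attempt, yielding completion with probability $1$ and the $(1/\theta)^T$ expectation bound. Your write-up is in fact somewhat more careful than the paper's on two points---you make the geometric/stochastic-domination argument explicit to handle the scheduler's adaptivity (the paper implicitly multiplies conditional probabilities without comment), and you prove the per-process claim uniformly from the definition of a response event rather than by case analysis over the progress conditions---but these are refinements of the same argument, not a different approach.
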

\begin{proof}
 Consider an interval of $T$ steps in an execution of algorithm $A$. Our first observation is that, since $A$ ensures $T$-bounded minimal progress,
 any process that performs $T$ \emph{consecutive} steps in this interval must complete a method invocation.
 To prove this fact, we consider cases on the minimal progress condition.
 If the minimal progress condition is $T$-bounded \emph{deadlock-freedom} or \emph{lock-freedom}, then every sequence of $T$ steps by the algorithm
 must complete some method invocation. In particular, $T$ steps by a single process must complete a method invocation. Obviously, this completed method invocation
 must be by the process itself. If the progress condition is $T$-bounded \emph{clash-freedom}, then the claim follows directly from the definition.

 Next, we show that, since $\mathcal{S}$ is a stochastic scheduler with positive probability threshold, each correct process will eventually be scheduled for
 $T$ consecutive steps, with probability $1$. By the weak fairness condition in the definition, for every time step $\tau$, every active process $p_i \in A_\tau$ is scheduled with probability
 at least $\theta > 0$. A process $p_i$ is \emph{correct} if $p_i \in A_\tau$, for all $\tau \geq 1$.
 By the definition, at each time step $\tau$, each correct process $p_i \in A_\tau$
 is scheduled for $T$ consecutive time units with probability at least $\theta^T > 0$. From the previous argument, it follows that
 every correct process eventually completes each of its method calls with probability $1$.
 By the same argument, the expected completion time for a process is at most $(1 / \theta)^{T}$.
\end{proof}

The proof is based on the fact that, for every correct process $p_i$, eventually, the scheduler will produce a solo
a schedule of length $T$. On the other hand, since the algorithm ensures minimal progress with bound $T$,
we show that $p_i$ must complete its operation
during this interval.

%This result matches the intuition for why many practical lock-free algorithms behave as being wait-free in practice:
%even if we only allow operations to complete when running solo, there is always a chance that a process will get enough consecutive steps to complete.
%Notice that the same proof applies if instead of considering the minimal progress bound $T$, we consider the individual progress bound $T'$, which is a bound on the number of consecutive steps a process has to take in order to complete. This reduces the obtained bound for maximal progress from $(1 / \theta)^{T}$ to $(1 / \theta)^{T'}$, which may be much smaller.
%We show the above simple result in order to give a flavor of our goal, while in the following section we will address the obtained bound for maximal progress more carefully.

We then prove that the finite bound for minimal progress is necessary.
For this, we devise an \emph{unbounded} lock-free algorithm which is not wait-free with probability $> 0$.
The main idea is to have processes that fail to change the value of a CAS repeatedly increase the number of steps they need to take to complete an operation. 
(See Algorithm~\ref{alg:unboundedLF}.)

\setcounter{AlgoLine}{0}
\begin{algorithm}[t]
\textbf{Shared}: CAS object $C$, initially $0$\;
Register $R$\\
\textbf{Local}: Integers $v, val, j$, initially $0$\\
\While{\lit{true}}
{ 
\quad $val \leftarrow \lit{CAS}(C,v,v+1)$\\
\quad \lIf {$val = v$} {\textbf{return}\;}
\quad \Else {
\quad \quad $v \leftarrow val$\\
\quad \quad \lFor {$j=1 \dots n^2 v$} { $read(R)$ \;}
}
}

\caption{An unbounded lock-free algorithm.}
\label{alg:unboundedLF}
\end{algorithm}

\begin{lemma}
 \label{unbounded}
 There exists an unbounded lock-free algorithm that is \emph{not wait-free} with high probability.
\end{lemma}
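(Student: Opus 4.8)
The plan is to establish the two halves of the statement separately: that Algorithm~\ref{alg:unboundedLF} is lock-free but admits no finite minimal-progress bound, and that under the uniform stochastic scheduler it fails to provide maximal progress with probability bounded away from $0$ (indeed, close to $1$). Lock-freedom I would argue from the facts that the shared counter $C$ is non-decreasing, is incremented by exactly one on each successful CAS, and that a successful CAS is exactly the event on which its executing process returns. So it suffices to show that in every infinite execution with a correct process, $C$ increases infinitely often. Were $C$ eventually frozen at a value $c$, then the next correct process to reach a CAS would fail it, set $v \leftarrow c$, perform its $n^2 c$ reads, and re-attempt $\lit{CAS}(C, c, c+1)$, which now succeeds since $C = c$ throughout -- contradicting that $C$ is frozen. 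Unboundedness follows from the same structure: the number $n^2 v$ of reads a process inserts before retrying grows with $v$, and $v$ grows without bound along an execution, so no fixed $B$ bounds the number of steps between consecutive completions in the worst case.

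For the randomized part, fix a process $p_1$ and decompose its execution into \emph{rounds}, where a round consists of the read phase following a failed CAS together with the closing CAS that ends it. If at the start of a round $p_1$ has just observed $C = v_k$ via its preceding failed CAS, it performs $n^2 v_k$ reads and then attempts $\lit{CAS}(C, v_k, v_k+1)$. The argument rests on a simple dichotomy: either no other process incremented $C$ during the read phase, in which case $C = v_k$ still holds and $p_1$'s CAS succeeds and it completes; or some other process committed, in which case $C > v_k$, the CAS fails, and the new value $v_{k+1} > v_k$ determines a strictly longer next round. Thus $p_1$ completes \emph{only} if it is granted a round during whose $n^2 v_k$ reads no competitor advances $C$, and to prove starvation I must show this event is unlikely on every round.

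I would bound the conditional probability $\delta_k$ that $p_1$ succeeds in round $k$, given it reaches round $k$. Completing the $n^2 v_k$ reads requires $p_1$ to be scheduled $n^2 v_k$ times; under the uniform scheduler a Chernoff bound shows this spans $\Theta(n^3 v_k)$ system steps with high probability, during which each of the other $n-1$ processes takes $\Theta(n^2 v_k)$ steps. The core estimate is that such a long window contains at least one competitor commit except with probability $\delta_k$, and that the $\delta_k$ decay fast enough (geometrically or better, since the window length $\Theta(n^3 v_k)$ grows with $k$) that $\sum_k \delta_k < \infty$. Since $p_1$ completes iff it succeeds in some round, by the chain rule
$$ \Pr\!\left[\bigcap_{k \geq 1} \{p_1 \text{ fails round } k\}\right] = \prod_{k \geq 1}(1 - \delta_k) > 0, $$
so with positive probability -- and, for an operation $p_1$ begins once $C$ is already large, with probability close to $1$, since then every $\delta_k$ is small -- process $p_1$ takes infinitely many steps without returning. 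On this event the execution does not provide maximal progress, so the algorithm is not wait-free, which also shows the boundedness hypothesis of Theorem~\ref{minmax} cannot be dropped.

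The step I expect to be the main obstacle is the conditional lower bound on the advancement of $C$ across $p_1$'s read window, because the processes are coupled both through the shared scheduler and through $C$ itself, so their round boundaries are correlated and not independent of $p_1$'s. The dangerous scenario is that all competitors simultaneously begin very long read phases of their own, freezing $C$ long enough that $p_1$ slips in a successful CAS. I would control this by lower-bounding the commit rate directly -- reusing the lock-free argument to show $C$ cannot stall longer than a single read phase -- and by exploiting that the factor $n^2 v$ was chosen precisely so that $p_1$'s window is long enough to provably contain a competitor commit with the required probability $1 - \delta_k$. Turning this coupling into a clean, summable bound, rather than a lossy union bound, is where the real work lies.
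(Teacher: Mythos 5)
Your high-level framing (fix $p_1$, decompose its execution into rounds, bound the conditional success probability $\delta_k$ of each round, and conclude starvation via $\prod_k (1-\delta_k) > 0$) can in principle be made rigorous, but the estimate it hinges on is exactly the step you leave open, and the tool you propose for it is provably too weak. You plan to control the dangerous scenario by showing that $C$ "cannot stall longer than a single read phase." A competitor's read phase, however, consists of $n^2 v$ reads, i.e.\ on the order of $n^3 v$ system steps under the uniform scheduler --- the same order as $p_1$'s own round-$k$ window. A stall bound of that size is fully compatible with a symmetric race in which all $n$ processes work through read phases of comparable length, and in such a race $p_1$ finishes first with probability roughly $1/n$. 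So this route can only yield $\delta_k$ of order $1/n$, not a summable sequence; and if $\delta_k$ is of order $1/n$ for every $k$, then $\prod_k(1-\delta_k) = 0$ and no starvation conclusion follows. The "real work" you defer is therefore the entire content of the lemma, and your sketched method for it would fail.

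The missing idea --- which is the heart of the paper's proof --- is that the symmetric scenario never arises, because after any successful CAS the winner is poised to win again at its very next scheduled step: its expected value is current, and every other process must take at least $n^2 v \geq n^2$ further steps (a failed CAS plus a full read phase) before it can possibly change $C$. In particular, at the start of every round of $p_1$ the most recent committer is some $q \neq p_1$ (had $p_1$ been the most recent committer, its CAS would have succeeded), and if no competitor committed during the window then $q$'s first scheduled step in it would have been a commit; since the window contains at least $n^2 v_k$ system steps, this gives $\delta_k \leq (1-1/n)^{n^2 v_k} \leq e^{-n v_k}$ with $v_k \geq k$, which is summable and makes your product argument go through. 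The paper exploits the same observation from the winner's side, which is simpler: conditioning on the identity of the first scheduled process $q$, it shows that $q$ wins every CAS forever provided $q$ is scheduled at least once within $n^2 \ell$ steps of its $\ell$-th win, and a single geometric union bound $\sum_{\ell}(1-1/n)^{n^2 \ell} \leq 2e^{-n}$ then shows that all $n-1$ other processes starve simultaneously with probability at least $1 - 2e^{-n}$ --- no per-round conditioning on the loser and no Chernoff estimates on window lengths are needed.
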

\begin{proof}
Consider the initial state of Algorithm~\ref{alg:unboundedLF}.
With probability at least $1/n$, each process $p_i$ can be the first process to take a step,
performing a successful CAS operation. Assume process $p_1$ takes the first step.
Conditioned on this event,
let $P$ be the probability that $p_1$ is not the next process that performs a successful CAS operation.
If $p_1$ takes a step in any of the next $n^2\cdot v$ steps, then it is the next process that wins the CAS.
The probability that this does not happen is at most $(1-1/n)^{n^2}$.
Summing over all iterations,
the probability that $p_1$ ever performs an unsuccessful CAS is therefore at most $\sum_{\ell=1}^{\infty}{(1-1/n)^{n^2\cdot \ell}} \leq 2(1-1/n)^{n^2} \leq 2e^{-n}$.
Hence, with probability at least $1-2e^{-n}$, process $p_1$ always wins the CAS,
while other processes never do. This implies that the algorithm is not wait-free, with high probability.
\end{proof}

\section{The Class of Algorithms $\SCU(q, s)$}
\label{sec:algorithm}

\DontPrintSemicolon
\begin{algorithm}[t]
{\small
\textbf{Shared}: registers $R, R_1, R_2, \ldots, R_{s - 1}$\;
% \textbf{Local}: $v$, initially $\bot$\;
\Indp
	 \textbf{procedure} $\lit{method-call}()$\;
	 Take preamble steps $O_1, O_2, O_q$ 	 \tcc{Preamble region}
	 \While{ $\lit{true}$ }
	 {
	    \tcc{Scan region:}
	    $v \gets R.\lit{read}()$\;
	    $v_1 \gets R_1.\lit{read}()$;
	    $v_2 \gets R_2.\lit{read}()$;
	    $\ldots$;
	    $v_{s - 1} \gets R_{s - 1}.\lit{read}()$\;
	    $v' \gets $ new proposed state based on $v, v_1, v_2, \ldots, v_{s - 1}$\;
	    \tcc{Validation step:}
	    $\id{flag} \gets \lit{CAS}( R, v, v')$\;
	    \If{ $\id{flag} = \lit{true}$ }
	    {
	      \textbf{output} $\id{success}$\;
	    }
	 }
\Indm
}
\caption{The structure of the lock-free algorithms in $\SCU_{q,s}$.}
\label{fig:uc}
\end{algorithm}

In this section, we define the class of algorithms $\SCU(q, s)$.
An algorithm in this class is structured as follows. (See Algorithm~\ref{fig:uc} for the pseudocode.)
The first part is the \emph{preamble}, where the process performs a series of $q$ steps.
The algorithm then enters a \emph{loop},  divided into a \emph{scan} region, which reads the values of $s$ registers,
and a \emph{validation} step, where the process performs a CAS operation, which attempts to change the value of a register.
The of the scan region is to obtain a view of the data structure state. In the validation step,
the process checks that this state is still valid, and attempts to change it. 
If the CAS is successful, then the operation completes. Otherwise, the process restarts the loop.
We say that an algorithm with the above structure with parameters $q$ and $s$ is in $\SCU(q, s)$.

We assume that steps in the preamble may perform memory updates, including to registers $R_1, \ldots, R_{s - 1}$,
but do not change the value of the decision register $R$. 
Also, two processes never propose the same value for the register $R$. (This can be easily enforced by adding a timestamp to each request.) 
The order of steps in the scan region can be changed without affecting our analysis.
%(See Appendix section~\ref{sec:lock-free-uc-example} for examples of data structures in this class.)
Such algorithms are used in several CAS-based concurrent implementations.
In particular, the class can be used to implement a concurrent version of every sequential object~\cite{Her91}.
It has also been used to obtain efficient implementations of several concurrent objects, such as fetch-and-increment~\cite{DiceLM13},
stacks~\cite{Treiber}, and queues~\cite{MichaelS96}.

\section{Analysis of the Class $\SCU(q, s)$}

We analyze the performance of algorithms in $\SCU(q, s)$ under the uniform stochastic scheduler. We assume that all threads execute the same 
method call with preamble of length $q$, and scan region of length $s$. 
Each thread executes an infinite number of such operations.
To simplify the presentation, we assume all $n$ threads are correct in the analysis. 
The claim is similar in the crash-failure case, and will be considered separately. 

We examine two parameters:
system latency, i.e., how often (in terms of system steps) does a new operation complete, and individual latency, i.e., how often
does \emph{a certain thread} complete a new operation.
Notice that the worst-case latency for the whole system is $\Theta(q+s n)$ steps,
while the worst-case latency for an individual thread is $\infty$, as the algorithm is not wait-free.
We will prove the following result:

\thmpostponed{thmscu}
{
 %Assume the universal construction for $n$ processes, with preamble of length $q$ and scan region of length $s$, under a uniform stochastic scheduler.
 %Then the global work is $O( q + s\sqrt{n} )$, and the individual work is $O( n ( q + s\sqrt{n} ) )$.
Let $A$ be an algorithm in $\SCU(q, s)$. Then, under the uniform stochastic scheduler, the system latency of $A$ is $O( q + s\sqrt{n} )$, and the individual latency is $O( n ( q + s\sqrt{n} ) )$.
}

We prove the upper bound by splitting the class $\SCU(q,s)$ into two separate components, and analyzing each under the uniform scheduler.
The first part is the loop code, which we call the \emph{scan-validate} component. The second part is the \emph{parallel code},
which we use to characterize the performance of the preamble code. In other words, we first consider $\SCU(0,s)$ and then $\SCU(q,0)$.

\subsection{The Scan-Validate Component}
\label{scan-validate}

Notice that, without loss of generality, we can simplify the pseudocode to contain a single read step before the CAS. 
We obtain the performance bounds for this simplified algorithm,
and then multiply them by $s$, the number of scan steps. That is, we start by analyzing $\SCU(0, 1)$ and then generalize to $\SCU({0,s})$.

\paragraph{Proof Strategy.} We start from the Markov chain representation of the algorithm, which we call the \emph{individual chain}. 
We then focus on a simplified representation, which only tracks \emph{system-wide progress}, irrespective of 
which process is exactly in which state. We call this the \emph{system chain}. We first prove 
the individual chain can be related to the system chain via a lifting function,  
which allows us to relate the individual latency to the system latency (Lemma~\ref{lemlifting}). 
We then focus on bounding system latency. 
We describe the behavior of the system chain via an iterated balls-and-bins game, whose stationary behavior we analyze in Lemmas~\ref{phaselength} and~\ref{lemranges}. 
Finally, we put together these claims to obtain an $O( \sqrt n )$ upper bound on the system latency of $\SCU(0, 1)$.

%Due to space limitations, we only present an outline of the argument. The complete version can be found in the Appendix. 

\setcounter{AlgoLine}{0}
\begin{algorithm}[t]
\textbf{Shared}: register $R$\;
\textbf{Local}: $v$, initially $\bot$\;
\Indp
	 \textbf{procedure} $\lit{scan-validate}()$\;
	
	 \While{ $\lit{true}$ }
	 {
	    $v \gets R.\lit{read}()$;
	    $v' \gets $ new value based on $v$\;
	    $\id{flag} \gets \lit{CAS}( R, v, v')$\;
	    \If{ $\id{flag} = \lit{true}$ }
	    {
	      \textbf{output} $\id{success}$\;
	    }
	 }
\Indm
\caption{The scan-validate pattern.}
\label{algo:scan-validate}
\end{algorithm}

\subsubsection{Markov Chain Representations}
\label{section-markov-representations}

We define the \emph{extended local state} of a process in terms of the state of the system, and of the type of step it is about to take.
Thus, a process can be in one of three states: either
it performs a read, or it CAS-es with the current value of $R$, or it CAS-es with an invalid value of $R$.
The state of the system after each step is completely described by the $n$ extended local states of processes.
We emphasize that this is different than what is typically referred to as the ``local" state of a process, 
in that the extended local state is described from the viewpoint of the entire system. 
That is, a process that has a pending CAS operation can be in either of two different extended local states, depending on whether its CAS will succeed or not. 
This is determined by the state of the entire system. 
A key observation is that, although the ``local'' state of a process can only change when it takes a step, 
its extended local state can change also when another process takes a step. 

\paragraph{The individual chain.} Since the scheduler is uniform, the system can be described as a Markov chain, where each state specifies
the extended local state of each process. Specifically, a process is in state $\id{OldCAS}$ if it is about to CAS with an old (invalid) value of $R$, 
it is in state $\id{Read}$ if it is about to read,
and is in state $\id{CCAS}$ if it about to CAS with the current value of $R$. (Once CAS-ing the process returns to state $\id{Read}$.)

A state $S$ of the individual chain is given by a combination of $n$ states $S = (P_1, P_2, \ldots, P_n)$, describing the extended local state of each process,
where, for each $i \in \{1, \ldots, n\}$,   $P_i \in \{\id{OldCAS}, \id{Read}, \id{CCAS}\}$ is the extended local state of process $p_i$.
There are $3^n - 1$ possible states, since the state where each process CAS-es with an old value cannot occur.
In each transition, each process takes a step, and the state changes correspondingly.
Recall that every process $p_i$ takes a step with probability $1 / n$.
Transitions are as follows.
If the process $p_i$ taking a step is in state $\id{Read}$ or $\id{OldCAS}$, then all other processes remain in the same extended local state, and $p_i$ moves to state $\id{CCAS}$ or $\id{Read}$, respectively.
If the process $p_i$ taking a step is in state $\id{CCAS}$, then all processes in state $\id{CCAS}$ move to state $\id{OldCAS}$, and $p_i$ moves to state $\id{Read}$.

%Our goal is to determine, for each process $p_i$, the expected number of system steps between two successful operations.
%We bound this quantity by analyzing the stationary distribution of the system chain.

\paragraph{The system chain.} To reduce the complexity of the individual Markov chain, we introduce a simplified representation,
which tracks system-wide progress.
More precisely, each state of the system chain tracks the number of processes in each state, irrespective of their identifiers: for any $a, b \in \{0, \ldots, n\}$, a state $x$ is defined by the tuple $(a, b)$,
where $a$ is the number of processes that are in state $\id{Read}$, and $b$ is the number of processes that are in state $\id{OldCAS}$.
Notice that the remaining $n - a - b$ processes must be in state $\id{CCAS}$.
The initial state is $(n, 0)$, i.e. all processes are about to read. The state $(0, n)$ does not exist.
The transitions in the system chain are as follows.
$\Pr[ (a + 1, b - 1) | (a, b) ] = b / n, \textnormal{ where } 0 \leq a \leq n \textnormal{ and } b > 0.$
$\Pr[ (a + 1, b) | (a, b) ] = 1 - (a + b) / n, \textnormal{ where } 0 \leq a < n.$
$\Pr[ (a - 1, b) | (a, b) ] = 1 - a / n, \textnormal{ where } 0 < a \leq n.$
(See Figure~\ref{fig:chain} for an illustration of the two chains in the two-process case.)

\begin{figure}[t]
\centering
\begin{minipage}{.5\textwidth}
  \centering
  \includegraphics[scale = 0.45]{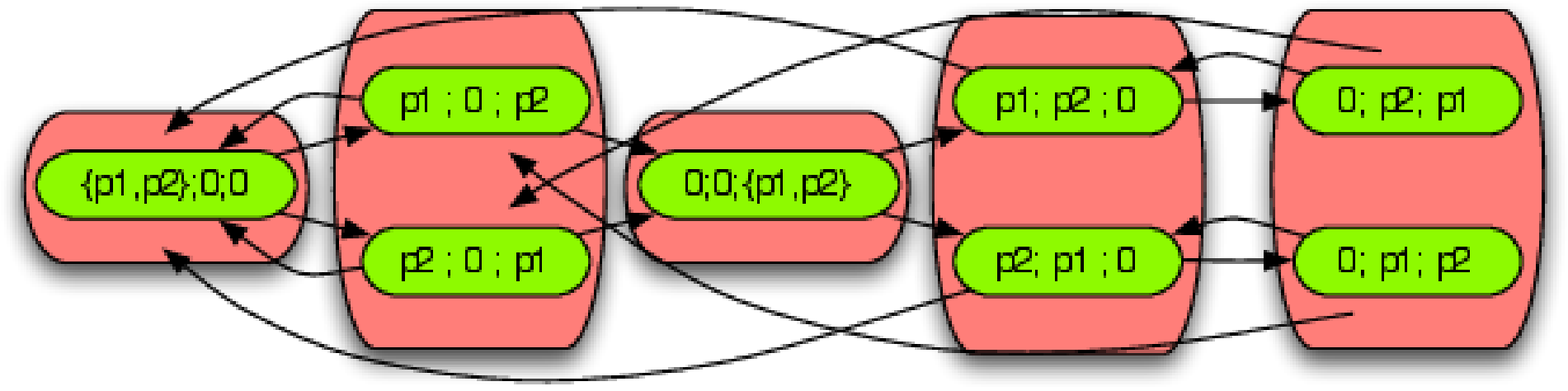}
  \captionof{figure}{The individual chain and the global chain for two processes. Each transition has probability $1/2$. The red clusters are the states in the system chain. 
  The notation $X; Y; Z$ means that processes in $X$ are in state $\id{Read}$, processes in $Y$ are in state $\id{OldCAS}$, and processes in $Z$ are in state $\id{CCAS}$.}
  \label{fig:chain}
\end{minipage}%
\begin{minipage}{.5\textwidth}
  \centering
  \includegraphics[width=.4\linewidth]{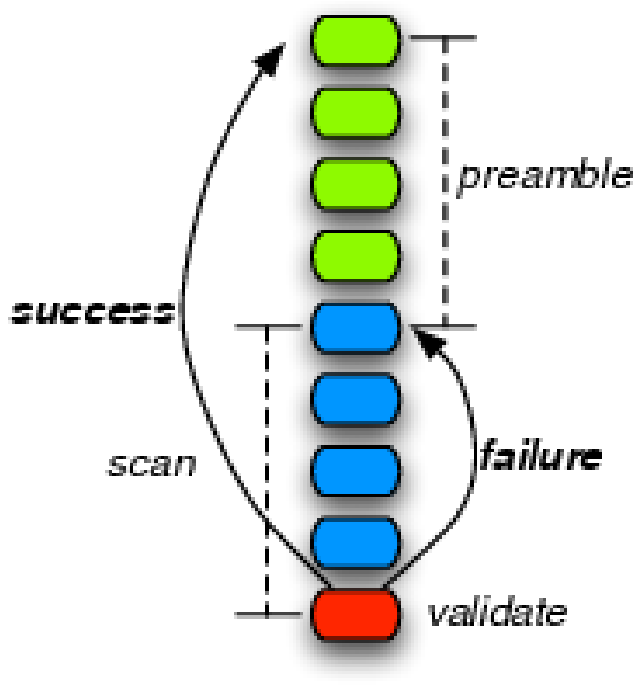}
  \captionof{figure}{Structure of an algorithm in $\SCU(q, s)$.}
  \label{fig:pattern}
\end{minipage}
\end{figure}

\subsubsection{Analysis Preliminaries}
\label{sec:analysis-appendix}

First, we notice that both the individual chain and the system chain are ergodic.

\begin{lemma}
 For any $n \geq 1$, the individual chain and the system chain are ergodic.
\end{lemma}

Let $\pi$ be the stationary distribution of the system chain, and let $\pi'$ be the stationary distribution for the individual chain.
For any state $k = (a, b)$ in the system chain, let $\pi_k$ be its probability in the stationary distribution.
Similarly, for state $x$ in the individual chain, let $\pi'_x$ be its probability in the stationary distribution.

We now prove that there exists a \emph{lifting} from the individual chain to the system chain.
Intuitively, the lifting from the individual chain to the system chain collapses all states in which
$a$ processes are about to read and $b$ processes are about to CAS with an old value (the identifiers of these processes are different for distinct states),
into to state $(a, b)$ from the system chain. 

\begin{definition}
 Let $\mathcal{S}$ be the set of states of the individual chain, and $\mathcal{M}$ be the set of states of the system chain.
 We define the function $f : \mathcal{S} \rightarrow \mathcal{M}$ such that each state $S = (P_1, \ldots, P_n)$, where
 $a$ processes are in state $\id{Read}$ and $b$ processes are in state $\id{OldCAS}$, is taken into state $(a, b)$ of the system chain.
\end{definition}

\noindent We then obtain the following relation between the stationary distributions of the two chains.
\begin{lemma}
 \label{lem:stat}
   For every state $k$ in the system chain, we have
 $ \pi_k = \sum_{x \in f^{-1}(k)} \pi'_x.$
\end{lemma}
\begin{proof}
 We obtain this relation algebraically, starting from the formula for the stationary distribution of the individual chain.
 We have that $\pi' A = \pi'$, where $\pi'$ is a row vector, and $A$ is the transition matrix of the individual chain.
 We partition the states of the individual chain into sets, where $G_{a, b}$ is the set of system states $S$ such that
 $f(S) = (a, b)$. Fix an arbitrary ordering $(G_k)_{k \geq 1}$ of the sets, and assume without loss of generality
 that the system states are ordered according to their set in the vector $\pi$ and in the matrix $A$, so that states mapping to the same set are consecutive.

 Let now $A'$ be the transition matrix across the sets $(G_k)_{k \geq 1}$. In particular, $a'_{kj}$ is the probability of moving from a state in
 the set $G_k$ to some state in the set $G_j$.
 Note that this transition matrix is the same as that of the system chain.
 Pick an arbitrary state $x$ in the individual chain, and let $f(x) = (a, b)$. In other words, state $x$ maps to set $G_k$, where $k = (a, b)$.
 We claim that for every set $G_j$,  $\sum_{y \in G_j} \Pr[ y | x ] = \Pr[ G_j | G_i ]$.

 To see this, fix $x = (P_0, P_1, \ldots, P_n)$. Since $f(x) = (a, b)$,
 there are exactly $b$ distinct states $y$ reachable from $x$ such that $f(y) = (a + 1, b - 1)$: the states where a process in extended local state $\id{OldCAS}$ takes a step.
 Therefore, the probability of moving to such a state $y$ is $b / n$.
 Similarly, the probability of moving to a state $y$ with $f(y) = (a + 1, b - 1)$ is $1 - (a + b) / n$, and the probability of moving
 to a state $y$ with $f(y) = (a - 1, b)$ is $a / n$.
  All other transition probabilities are $0$.

  To complete the proof, notice that we can collapse the stationary distribution $\pi'$ onto the row vector $\bar{\pi}$, where the $k$th element of $\bar{\pi}$
  is $\sum_{x \in G_k} \pi'_x$. Using the above claim and the fact that $\pi' A = \pi'$, we obtain by calculation that $\bar{\pi} A' = \bar{\pi}$. Therefore, $\bar{\pi}$ is
  a stationary distribution for the system chain. Since the stationary distribution is unique, $\bar{\pi} = \pi$, which concludes the proof.
\end{proof}

\noindent In fact, we can prove that the function $f : \mathcal{S} \rightarrow \mathcal{M}$ defined above induces a lifting from the individual chain to the system chain.

\begin{lemma}
 \label{lemlifting}
 The system Markov chain is a lifting of the individual Markov chain.
\end{lemma}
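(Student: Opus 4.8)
The plan is to verify the formal definition of a lifting directly, using Lemma~\ref{lem:stat} to supply the stationary-distribution half of the condition and a per-edge flow calculation to supply the rest. Recall that $M'$ (the system chain) is a lifting of $M$ (the individual chain) if there is a function $f : \mathcal{S} \to \mathcal{M}$ such that the ergodic flows satisfy
\[
Q_{ij} = \sum_{x \in f^{-1}(i),\, y \in f^{-1}(j)} Q'_{xy}, \quad \forall i, j \in \mathcal{M},
\]
where $Q_{ij} = \pi_i p_{ij}$ is the flow in the system chain and $Q'_{xy} = \pi'_x p'_{xy}$ is the flow in the individual chain. So the whole task reduces to checking this one flow-homomorphism identity for the $f$ defined above.

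**First I would** fix two system states $i = (a,b)$ and $j = (a',b')$ and expand the right-hand side. For a fixed individual state $x \in f^{-1}(i)$, the inner sum $\sum_{y \in f^{-1}(j)} Q'_{xy} = \pi'_x \sum_{y \in f^{-1}(j)} p'_{xy}$ involves exactly the aggregated transition probability from $x$ into the block $f^{-1}(j)$. But this aggregated probability was computed already inside the proof of Lemma~\ref{lem:stat}: it equals $b/n$ when $j = (a+1, b-1)$, equals $1 - (a+b)/n$ when $j = (a+1, b)$, equals $a/n$ when $j = (a-1, b)$, and is $0$ otherwise — and crucially this value depends only on $i = f(x)$, not on the particular $x$. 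Denote this common value $p_{ij}$ (which is precisely the system-chain transition probability). Then
\[
\sum_{x \in f^{-1}(i)} \sum_{y \in f^{-1}(j)} Q'_{xy} = \sum_{x \in f^{-1}(i)} \pi'_x \, p_{ij} = p_{ij} \sum_{x \in f^{-1}(i)} \pi'_x.
\]

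**Then I would** close the argument by invoking Lemma~\ref{lem:stat}, which gives $\sum_{x \in f^{-1}(i)} \pi'_x = \pi_i$. Substituting yields $\sum_{x,y} Q'_{xy} = \pi_i \, p_{ij} = Q_{ij}$, which is exactly the required identity. Since both chains were already shown to be ergodic on finite state spaces, the hypotheses for the lifting definition are met, completing the proof. I should also note in passing that $f$ is well-defined and surjective onto the reachable system states, so the clustering is genuine.

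**The main obstacle** — and it is more a matter of careful bookkeeping than of real difficulty — is establishing the key independence fact that the block-aggregated transition probability from $x$ into $f^{-1}(j)$ depends only on $f(x)$ and not on $x$ itself. This is what makes the flow factor cleanly as $p_{ij} \sum_x \pi'_x$, and it is exactly the combinatorial count already carried out in Lemma~\ref{lem:stat} (counting, e.g., the $b$ distinct successor states reachable by stepping an $\id{OldCAS}$ process). Because that lemma has done the heavy lifting, the proof here is essentially a one-line consequence: the flow identity follows by multiplying the established block-transition probabilities by the already-matched stationary masses. I would therefore present this lemma as a short corollary of Lemma~\ref{lem:stat}, emphasizing that the nontrivial content — the transition-probability homomorphism and the stationary-mass aggregation — is inherited from the earlier argument.
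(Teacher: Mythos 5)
Your proposal is correct and follows essentially the same route as the paper's own proof: both verify the flow-homomorphism identity edge by edge, using the fact that the aggregated transition probability from any $x \in f^{-1}(k)$ into a block $f^{-1}(j)$ equals the system-chain probability $p_{kj}$ (the paper re-derives this by counting, e.g., the $b$ successor states of weight $1/n$ each; you cite the same count from the proof of Lemma~\ref{lem:stat}), and then both invoke Lemma~\ref{lem:stat} to replace $\sum_{x \in f^{-1}(k)} \pi'_x$ by $\pi_k$. The only difference is presentational: you treat all transition cases uniformly via the block-aggregation observation, whereas the paper works out the $(a+1,b-1)$ case explicitly and notes the others follow similarly.
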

\begin{proof}
 Consider a state $k$ in $\mathcal{M}$. Let $j$ be a neighboring state of $k$ in the system chain.
 The ergodic flow from $k$ to $j$ is $p_{kj} \pi_{k}$. In particular, if $k$ is given by the tuple $(a, b)$,
 $j$ can be either $(a + 1, b - 1)$ or $(a + 1, b)$, or $(a - 1, b)$.
 Consider now a state $x \in \mathcal{M}$, $x = (P_0, \ldots, P_n)$, such that $f( x ) = k$.
 By the definition of $f$, $x$ has $a$ processes in state $\id{Read}$, and $b$ processes in state $\id{OldCAS}$.

 If $j$ is the state $(a + 1, b - 1)$, then the flow from $k$ to $j$, $Q_{kj}$, is $ b \pi_k  / n$.
 The state $x$ from the individual chain has exactly $b$ neighboring states $y$ which map to the state $(a + 1, b - 1)$, one for each of the $b$
 processes in state $\id{OldCAS}$ which might take a step.
 Fix $y$ to be such a state.
 The probability of moving from $x$ to $y$ is $1 / n$.
 Therefore, using Lemma~\ref{lem:stat}, we obtain that
 \begin{eqnarray*}
  \sum_{x \in f^{-1}(k), y \in f^{-1}(j)} Q'_{xy} = \sum_{x \in f^{-1}(k)} \sum_{y \in f^{-1}(j)} \frac{1}{n} \pi'_x =
  \frac{b}{n} \sum_{x \in f^{-1}(k)} \pi'_x = \frac{b}{n} \pi_k = Q_{kj}.
 \end{eqnarray*}
 \noindent The other cases for state $j$ follow similarly.
 Therefore, the lifting condition holds.
\end{proof}

\noindent Next, we notice that, since states from the individual chain which map to the same system chain state are symmetric,
their probabilities in the stationary distribution must be the same. 

\begin{lemma}
 \label{lem:symmetry}
 Let $x$ and $x'$ be two states in $\mathcal{S}$ such that $f(x) = f(y)$. Then $\pi'_x = \pi'_y$.
\end{lemma}
\begin{proof}[Proof (Sketch).] 
 The proof follows by noticing that, for any $i, j \in \{1, 2, \ldots, n\}$, switching indices $i$ and $j$ in the Markov chain representation 
 maintains the same transition matrix. Therefore, the stationary probabilities for symmetric states (under the swapping of process ids) must be the same. 
\end{proof}

\noindent We then use the fact that the code is symmetric and the previous Lemma to obtain an upper bound on the expected time between two successes
for a specific process.

\begin{lemma}
 \label{lemwork}
 Let $W$ be the expected system steps between two successes in the stationary distribution of the system chain. Let $W_i$ be the expected system steps between two successes of process $p_i$ in the stationary distribution of the individual chain.
 For every process $p_i$, $W = n W_i$.
\end{lemma}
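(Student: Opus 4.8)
The plan is to express both $1/W$ and $1/W_i$ as long-run frequencies of distinguished transitions in the two stationary chains, and then relate these frequencies through the symmetry of Lemma~\ref{lem:symmetry}. The enabling fact is a standard renewal observation: in a stationary ergodic Markov chain, if a designated class of transitions (the ``success'' steps) occurs with long-run frequency $r$ per step, then the expected number of steps between two consecutive such transitions is exactly $1/r$. By the ergodic theorem this frequency is precisely the total ergodic flow carried by the success edges. Thus it suffices to compute, in each chain, the stationary flow along the edges that correspond to a success, and to show that the system-wide success flow is $n$ times the flow of $p_i$-successes.

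First I would compute the system-wide success frequency. A success in the system chain is exactly a step taken by a process in state $\id{CCAS}$; from a state $(a,b)$ this happens with probability $(n-a-b)/n$, the fraction of processes in $\id{CCAS}$. Summing the ergodic flow over all success transitions gives
$$\frac{1}{W} = \sum_{(a,b)} \pi_{(a,b)} \cdot \frac{n - a - b}{n}.$$
Next I would compute the frequency of $p_i$-successes in the individual chain. A $p_i$-success is a step in which $p_i$, while in state $\id{CCAS}$, is the scheduled process; since the scheduler is uniform this has probability $1/n$ out of each individual state $x$ with $P_i = \id{CCAS}$, so
$$\frac{1}{W_i} = \frac{1}{n} \sum_{x \,:\, P_i = \id{CCAS}} \pi'_x.$$
The crux is to evaluate this inner sum by fibers of $f$.

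Fix a system state $k = (a,b)$ and set $c = n - a - b$. By Lemma~\ref{lem:stat} the total stationary mass of $f^{-1}(k)$ equals $\pi_k$, and by Lemma~\ref{lem:symmetry} this mass is distributed equally among the $\frac{n!}{a!\,b!\,c!}$ individual states mapping to $k$. A counting argument shows that exactly a $c/n$ fraction of these states place $p_i$ in $\id{CCAS}$: fixing $p_i$ in $\id{CCAS}$ and arranging the remaining $n-1$ processes yields $\frac{(n-1)!}{a!\,b!\,(c-1)!}$ states, and $\frac{(n-1)!/(a!\,b!\,(c-1)!)}{n!/(a!\,b!\,c!)} = c/n$. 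Hence $\sum_{x \in f^{-1}(k),\, P_i = \id{CCAS}} \pi'_x = \pi_k \cdot (n-a-b)/n$. Summing over all system states and substituting back gives $1/W_i = (1/n)\sum_{(a,b)} \pi_{(a,b)}(n-a-b)/n = (1/n)\cdot(1/W)$, which rearranges to $W = n W_i$. I expect the main obstacle to be making the renewal step fully rigorous---justifying that the stationary frequency of a class of transitions is the reciprocal of the mean inter-occurrence time---together with the care needed to confirm that Lemma~\ref{lem:symmetry} indeed forces the uniform splitting of stationary mass inside each fiber $f^{-1}(k)$; once these are in place the remainder is flow bookkeeping and a one-line multinomial identity.
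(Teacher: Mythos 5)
Your proposal is correct and takes essentially the same route as the paper's own proof: both express the two success rates as stationary transition frequencies, invoke the renewal relation between frequency and expected inter-occurrence time, and use Lemma~\ref{lem:stat} together with the symmetry of Lemma~\ref{lem:symmetry} to show that exactly an $(n-a-b)/n$ fraction of each fiber $f^{-1}((a,b))$'s stationary mass has $p_i$ in state $\id{CCAS}$ (the paper asserts this fraction directly where you supply the multinomial count). One caveat: your identity $1/W_i = (1/n)\cdot(1/W)$ rearranges to $W_i = nW$, not $W = nW_i$; this is in fact what the paper's own proof concludes, the stated form of the lemma being a typo, so your final rearrangement line should be corrected to match your (correct) derivation.
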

\begin{proof}
Let $\mu$ be the probability that a step  is a success by \emph{some} process.
	Expressed in the system chain, we have that $\mu = \sum_{j = (a, b)} (1 - (a + b)/n) \pi_j$.
	Let $X_i$ be the set of states in the individual chain in which $P_i = \id{CCAS}$.
	Consider the event that a system step is a step in which $p_i$ succeeds.
	This must be a step by $p_i$ from a state in $X_i$.
	The probability of this event in the stationary distribution of the individual chain
	is $\eta_i = \sum_{x \in X_i} \pi'_x / n.$
	
	Recall that the lifting function $f$ maps all states $x$ with $a$ processes in state $\id{Read}$ and $b$ processes in state $\id{OldCAS}$
	to state $j = (a, b)$. Therefore, $\eta_i = (1 / n) \sum_{j = (a, b)} \sum_{x \in f^{-1} (j) \cap X_i} \pi'_x$.
	By symmetry, we have that $\pi'_x = \pi'_y$, for every states $x, y \in f^{-1}(j)$.
	The fraction of states in $f^{-1}(j)$ that have $p_i$ in state $\id{CCAS}$ (and are therefore also in $X_i$) is
	$(1 - (a + b)/n)$. Therefore, $\sum_{x \in f^{-1} (j) \cap X_i} \pi'_x = (1 - (a + b)/n) \pi_j$.
	
	We finally get that, for every process $p_i$, $\eta_i = (1 / n) \sum_{j = (a, b)} (1 - (a + b)/n) \pi_j = (1 / n) \mu$.
	On the other hand, since we consider the stationary distribution, from a straightforward extension of Theorem~\ref{thm:stat},
	we have that $W_i = 1 / \eta_i$, and $W = 1 / \mu$.
	Therefore, $W_i = n W$, as claimed.
\end{proof}

\subsubsection{System Latency Bound}
\label{sec:global-progress-appendix}

In this section we provide an upper bound on the quantity $W$, the expected number of system steps between two successes in stationary distribution of the system chain.
We prove the following.

\begin{theorem}
\label{thmglobal}
The expected number of steps between two successes in the system chain is $O(\sqrt{n})$.
\end{theorem}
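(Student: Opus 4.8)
The plan is to collapse the whole question into a single stationary expectation and then control that expectation by reading off moment identities from stationarity. By Lemma~\ref{lemwork} we have $W = 1/\mu$, where $\mu = \sum_{(a,b)}(1-(a+b)/n)\pi_{(a,b)}$ is the stationary probability that a step is a success. Writing $c = n-a-b$ for the number of processes in state $\id{CCAS}$, a step is a success exactly when one of these $c$ processes is scheduled, so $\mu = E_\pi[c]/n$ and therefore $W = n / E_\pi[c]$. Hence the theorem is equivalent to the lower bound $E_\pi[c] = \Omega(\sqrt n)$: the system produces successes quickly precisely when, on average, many processes are poised to perform a valid CAS.

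To control $E_\pi[c]$ I would use the elementary fact that under the stationary distribution the expected one-step change of every function of the state vanishes, i.e. $E_\pi[g(X_1)-g(X_0)]=0$ for all $g$, since $\pi P = \pi$. I read off the one-step drifts directly from the transition rules: a $\id{Read}$ step (probability $a/n$) does $c\mapsto c+1$; an $\id{OldCAS}$ step (probability $b/n$) leaves $c$ unchanged; and a successful $\id{CCAS}$ step (probability $c/n$) invalidates every pending CAS and so resets $c\mapsto 0$. Applying the vanishing-drift principle to $g=a$, $g=c$, and $g=c^2$ yields three exact identities,
$$ E_\pi[a] = \tfrac{n}{2}, \qquad E_\pi[c^2] = E_\pi[a] = \tfrac{n}{2}, \qquad E_\pi[c^3] = 2\,E_\pi[ac] + E_\pi[a]. $$
The first two already give the matching lower bound $W=\Omega(\sqrt n)$, because Jensen gives $E_\pi[c] \le \sqrt{E_\pi[c^2]} = \sqrt{n/2}$; the third identity is what I will use for the upper bound.

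To close the argument I would break the moment hierarchy with one crude bound. Since $a \le n$ pointwise, $E_\pi[ac] \le n\,E_\pi[c]$, so the third identity gives $E_\pi[c^3] \le 2n\,E_\pi[c] + n/2$. On the other side, the moment-interpolation (log-convexity) inequality $E_\pi[c^2]^2 \le E_\pi[c]\,E_\pi[c^3]$ together with $E_\pi[c^2]=n/2$ gives $E_\pi[c^3] \ge (n/2)^2 / E_\pi[c]$. Writing $m := E_\pi[c]$ and combining the two bounds,
$$ \frac{n^2}{4} \le m\,E_\pi[c^3] \le 2n\,m^2 + \tfrac{n}{2}\,m, $$
so $2m^2 + m/2 \ge n/4$, which forces $m = \Omega(\sqrt n)$. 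Substituting back, $W = n/m = O(\sqrt n)$, as required.

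The main obstacle is exactly the lower bound $E_\pi[c] = \Omega(\sqrt n)$: the two linear/quadratic identities only bound $E_\pi[c]$ from above, and the moment hierarchy does not close, since the drift of $c^2$ pulls in both the mixed term $E_\pi[ac]$ and the third moment $E_\pi[c^3]$. The crux is therefore to cut this dependence with the sufficient estimate $E_\pi[ac]\le n\,E_\pi[c]$ and a reverse-Jensen supplied by moment interpolation. This is the rigorous shadow of the balls-into-bins intuition: between two successes the number of valid CAS-ers grows at a constant rate because a constant fraction of processes are about to read (indeed $E_\pi[a]=n/2$), and a birthday-paradox collision — a successful CAS — then occurs after $\Theta(\sqrt n)$ steps. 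An alternative, more combinatorial route would make this explicit by bounding the expected length of an inter-success interval directly, conditioning on the typical event that $\Theta(n)$ processes are in state $\id{Read}$; I expect that route to require more work on concentration of the state than the moment argument above.
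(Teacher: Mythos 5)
Your proof is correct, and it takes a genuinely different route from the paper's. The paper bounds $W$ by recasting the system chain as an iterated balls-into-bins game: it partitions the execution into phases between successes, uses birthday-paradox/Poisson-approximation and Chernoff arguments to bound the expected and high-probability length of a phase as a function of the phase-start state (Lemma~\ref{phaselength}), shows that the chain rarely enters and quickly exits the bad range where few processes are reading (Lemma~\ref{lemranges}), and finally controls the stationary mass of the bad states via a coarse two-state auxiliary chain. You instead work entirely with exact zero-drift identities at stationarity for the test functions $a$, $c$, $c^2$ --- giving $E_\pi[a]=n/2$, $E_\pi[c^2]=E_\pi[a]=n/2$, and $E_\pi[c^3]=2E_\pi[ac]+n/2$ --- and close the moment system with the pointwise bound $a\le n$ and the Lyapunov/Cauchy--Schwarz inequality $E_\pi[c^2]^2\le E_\pi[c]\,E_\pi[c^3]$, forcing $E_\pi[c]=\Omega(\sqrt n)$ and hence $W=n/E_\pi[c]=O(\sqrt n)$; I verified each identity against the chain's dynamics and the algebra is sound. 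One point you should make explicit: your reading of the one-step dynamics (a success resets $c$ to $0$) is the correct one, since it matches the paper's description of the individual chain and the reset rule of the balls-into-bins game, but the displayed transition probabilities of the system chain in Section~\ref{section-markov-representations} contain typos (as written they do not sum to one), so you should state that you derive the transitions from the individual-chain description rather than from that display. As for what each approach buys: yours is much shorter, needs no concentration machinery, produces explicit constants, and yields the matching lower bound $W=\Omega(\sqrt n)$ for free via $E_\pi[c]\le\sqrt{E_\pi[c^2]}$ (establishing the tightness the paper only asserts); the paper's phase analysis is heavier but gives distributional information --- high-probability bounds on inter-success times and on the structure of the stationary state --- which pure stationary-moment identities cannot provide, and which matters when one wants "practically wait-free" guarantees stronger than bounds in expectation.
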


%\todo{Do we get this also for crashes? Can we argue something similar only with the number of active processes $k$ instead of $n$?}

\paragraph{An iterated balls-into-bins game.} To bound $W$, we model the evolution of the system as a balls-into-bins game.
We will associate each process with a bin. At the beginning of the execution, each bin already contains one ball.
At each time step, we throw a new ball into a uniformly chosen random bin.
Essentially, whenever the process takes a step, its bin receives an additional ball.
 We continue to distribute balls until the first time a bin acquires \emph{three} balls. We call this event a \emph{reset}.
When a reset occurs, we set the number of balls in the bin containing three balls to one,
and all the bins containing two balls become empty.
The game then continues until the next reset.

This game models the fact that initially, each process is about to read the shared state, and must take two steps in order to update its value. 
Whenever a process changes the shared state by CAS-ing successfully,
all other processes which were CAS-ing with the correct value are going to fail their operations; in particular, they now need to take three steps in order to change the shared state. We therefore reset the number of balls in the corresponding bins to $0$.

More precisely, we define the game in terms of \emph{phases}. A phase is the interval between two resets. For phase $i$,
we denote by $a_i$ the number of bins with one ball at the beginning of the phase, and by $b_i$ the number of bins with $0$ balls
at the beginning of the phase.  Since there are no bins with two or more balls at the start of a phase, we have that $a_i + b_i = n$.

It is straightforward to see that this random process evolves in the same way as the system Markov chain.
In particular, notice that the bound $W$ is the expected length of a phase. To prove Theorem~\ref{thmglobal}, we first obtain a bound on the length of a phase.

\begin{lemma}
 \label{phaselength}
  Let $\alpha \geq 4$ be a constant.
  The expected length of phase $i$ is at most $\min( 2 \alpha n / \sqrt{a_i}, 3 \alpha n / b_i^{1 / 3} )$.
  The phase length is $2 \alpha \min(n \sqrt{\log n} / \sqrt{a_i} , n (\log n)^{1 / 3} / b_i^{1/3}) ,$
  with probability at least $1 - 1 / n^\alpha$. The probability that the length of a phase is less than $\min ( n / \sqrt{a_i}, n / (b_i)^{1 / 3}) / \alpha$ is at most
 $1 / (4\alpha^2)$.
\end{lemma}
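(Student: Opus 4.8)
The plan is to analyze the phase length by focusing on the dynamics of the bins that start with one ball, since a reset (a bin reaching three balls) is governed primarily by these ``loaded'' bins. At the start of phase $i$, there are $a_i$ bins with one ball and $b_i$ bins with zero balls. A reset occurs as soon as some bin reaches three balls. A bin starting with one ball needs only two additional balls to trigger a reset, whereas a bin starting empty needs three. The two expressions $2\alpha n/\sqrt{a_i}$ and $3\alpha n/b_i^{1/3}$ correspond to these two competing mechanisms: a reset caused by doubling up on one of the $a_i$ loaded bins (a birthday-paradox-type argument giving $\Theta(n/\sqrt{a_i})$), versus a reset caused by tripling up on one of the $b_i$ empty bins (giving $\Theta(n/b_i^{1/3})$). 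The phase ends at the first of these events, so its length is at most the minimum of the two bounds.

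First I would establish the expected upper bound. For the $\sqrt{a_i}$ term, I would estimate the probability that within $t$ throws no loaded bin has received a second ball. With each throw landing in a particular loaded bin with probability $1/n$, after $t$ throws the expected number of second-hits among the $a_i$ loaded bins grows like $\binom{t}{2} a_i / n^2 \approx t^2 a_i/(2n^2)$, so the process is unlikely to survive past $t \approx n/\sqrt{a_i}$ throws. Formally I would use a union/second-moment bound: the probability that no loaded bin is hit twice within $t$ steps is roughly $\exp(-\Theta(t^2 a_i/n^2))$, yielding expected survival time $O(n/\sqrt{a_i})$. The constant $2\alpha$ accommodates the tail. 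Symmetrically, for the $b_i^{1/3}$ term I would bound the time until some empty bin accumulates three balls: the expected number of triple-hits after $t$ throws scales like $\binom{t}{3} b_i/n^3 \approx t^3 b_i/(6n^3)$, so the reset is likely by $t \approx n/b_i^{1/3}$, giving expected time $O(n/b_i^{1/3})$. Taking the minimum of the two independent upper bounds gives the claimed expectation.

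Next I would prove the two concentration statements. For the high-probability upper bound (the $\sqrt{\log n}$ and $(\log n)^{1/3}$ factors), I would show that by time $t = 2\alpha\min(n\sqrt{\log n}/\sqrt{a_i}, n(\log n)^{1/3}/b_i^{1/3})$ a reset has occurred except with probability $1/n^\alpha$. This follows by pushing the survival-probability estimates above: running the process for an extra $\sqrt{\log n}$ (resp.\ $(\log n)^{1/3}$) factor in length drives the survival probability from constant down to $\exp(-\Theta(\alpha\log n)) = n^{-\Theta(\alpha)}$. For the lower-tail bound, I would show that a phase is unlikely to be \emph{very short}: running for only $\min(n/\sqrt{a_i}, n/b_i^{1/3})/\alpha$ steps, the expected number of reset-causing events is small (of order $1/\alpha^2$), so by Markov's inequality the probability that any reset has occurred is at most $1/(4\alpha^2)$.

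The main obstacle will be making the second-moment and tail estimates rigorous while correctly handling the interaction between the two reset mechanisms. In particular, the loaded-bin (doubling) and empty-bin (tripling) events are not independent, and balls thrown into empty bins gradually create new ``loaded'' bins whose subsequent second hits could also trigger a reset, so the set of dangerous bins evolves during the phase. I expect the cleanest route is to bound the phase length \emph{separately} by each mechanism in isolation (ignoring the other) to get each term, then take the minimum, since ignoring one reset source only \emph{increases} the stopping time and hence yields a valid upper bound; for the lower-tail bound I would instead union-bound over both event types. Carefully choosing the constant $\alpha \geq 4$ and verifying that the tail exponents line up with $1/n^\alpha$ and $1/(4\alpha^2)$ is the fiddly part, but the birthday-paradox scaling $n/\sqrt{a_i}$ and the triple-collision scaling $n/b_i^{1/3}$ are the conceptual core.
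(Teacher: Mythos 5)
Your proposal follows essentially the same strategy as the paper's proof: decompose the reset event into the two competing mechanisms (a one-ball bin receiving two more balls versus an empty bin receiving three), bound the time to each event \emph{in isolation} --- legitimate because the phase ends at the first such event, so each isolated bound upper-bounds the phase length --- and take the minimum, with birthday-paradox scaling $n/\sqrt{a_i}$ and triple-collision scaling $n/b_i^{1/3}$. The difference is in the technical execution: the paper first counts throws \emph{restricted to} $A_i$ (resp.\ $B_i$) using a Poisson approximation (Claims~\ref{cl:aclaim} and~\ref{cl:bclaim}), and then converts set-restricted throws into system steps via a Chernoff bound (a step lands in $A_i$ with probability $a_i/n$), whereas you work directly with the full uniform throwing process via first/second-moment estimates. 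Both routes are sound, and your observation that the evolving set of ``newly loaded'' bins can be ignored for the upper bound, and union-bounded for the lower bound (any bin reaching three balls either started with one ball and was hit twice, or started empty and was hit three times), is exactly the right way to dispose of that subtlety.

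One quantitative caveat concerns the lower-tail claim, which your Markov-inequality plan addresses --- notably, the paper's own proof never spells this part out. With $T = \min(n/\sqrt{a_i},\, n/b_i^{1/3})/\alpha$, the expected number of reset-causing events within $T$ throws is at most $a_i\binom{T}{2}/n^2 + b_i\binom{T}{3}/n^3 \le 1/(2\alpha^2) + 1/(6\alpha^3)$, which is roughly \emph{twice} the claimed bound of $1/(4\alpha^2)$. So the naive first-moment count gives the right order $\Theta(1/\alpha^2)$ but not the stated constant; you would need either a sharper computation (e.g.\ shaving the pair/triple counts rather than using the crude $T^2/2$ and $T^3/6$ bounds) or to note that the downstream use of the lemma (Claim~\ref{cl:move}, where phases of length at least $\min(n/\sqrt{a_i}, n/b_i^{1/3})/c$ are ``regular'' with probability $1 - O(1/c^2)$) only requires failure probability $O(1/\alpha^2)$, not the specific constant $1/(4\alpha^2)$.
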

\begin{proof}
Let $A_i$ be the set of bins with one ball, and let $B_i$ be the set of bins with zero balls, at the beginning of the phase.
We have $a_i = |A_i|$ and $b_i = |B_i|$.
Practically, the phase ends either when a bin in $A_i$ or a bin in $B_i$ first contains three balls.

For the first event to occur, some bin in $A_i$ must receive two additional balls.
Let $c \geq 1$ be a large constant, and assume for now that $a_i \geq \log n$ and $b_i \geq \log n$ (the other cases will be treated separately).
The number of bins in $A_i$ which need to receive a ball before some bin receives two new balls is concentrated around $\sqrt{a_i}$, by the birthday paradox.
More precisely, the following holds.

\begin{claim}
\label{cl:aclaim}
Let $X_i$ be random variable counting the number of bins in $A_i$ chosen to get a ball before some bin in $A_i$ contains three balls,
and fix $\alpha \geq 4$ to be a constant.
Then the expectation of $X_i$ is less than $2 \alpha \sqrt{a_i}$.
The value of $X_i$ is at most $\alpha \sqrt{a_i \log n}$, with probability at least $1 - 1 / n^{\alpha^2}$.
\end{claim}
\begin{proof}
We employ the Poisson approximation for balls-into-bins processes. In essence, we want to bound the number of balls to be thrown uniformly into $a_i$ bins until
two balls collide in the same bin, in expectation and with high probability. Assume we throw $m$ balls into the $a_i \geq \log n$ bins.
It is well-known that the number of balls a bin receives during this process can be approximated as a Poisson random variable with mean $m / a_i$ (see, e.g., ~\cite{Mitz}).
In particular, the probability that no bin receives two extra balls during this process is at most

$$ 2 \left( 1 - \frac{e^{-m/a_i} (\frac{m}{a_i})^2 }{2} \right)^{a_i} \leq 2 \left(\frac{1}{e}\right)^{\frac{m^2}{2a_i} e^{-m / a_i}}.$$

\noindent If we take $m = \alpha \sqrt{a_i}$ for $\alpha \geq 4$ constant, we obtain that this probability is at most
$$ 2 \left( \frac{1}{e} \right)^{\alpha^2 e^{-\alpha / \sqrt{a_i}} / 2} \leq \left(\frac{1}{e}\right)^{\alpha^2 / 4},$$
\noindent where we have used the fact that $a_i \geq \log n \geq \alpha^2$.
Therefore, the expected number of throws until some bin receives two balls is at most $2 \alpha \sqrt{a_i}$.
Taking $m = \alpha \sqrt{a_i \log n}$, we obtain that some bin receives two new balls within $\alpha \sqrt{a_i \log n}$ throws with probability
at least $1 - 1 / n^{\alpha^2}$.
\end{proof}

\noindent We now prove a similar upper bound for the number of bins in $B_i$ which need to receive a ball before some such bin receives three new balls, as required to end the phase.

\begin{claim}
\label{cl:bclaim}
 Let $Y_i$ be random variable counting the number of bins in $B_i$ chosen to get a ball before some bin in $B_i$ contains three balls,
and fix $\alpha \geq 4$ to be a constant. Then the expectation of $Y_i$ is at most $3 \alpha b_i^{2/3}$,
and $Y_i$ is at most $\alpha (\log n)^{1 / 3} b_i^{2 / 3}$, with probability at least $1 - (1 / n)^{\alpha^3 / 54}$.
\end{claim}
\begin{proof}
 We need to bound the number of balls to be thrown uniformly into $b_i$ bins (each of which is initially empty), until some bin gets three balls.
 Again, we use a Poisson approximation.
 We throw $m$ balls into the $b_i \geq \log n$ bins.
 The probability that no bin receives three or more balls during this process is at most

 $$ 2\left( 1 - \frac{e^{-m/a_i} (m /b_i)^3 }{6} \right)^{b_i} = 2\left(\frac{1}{e}\right)^{\frac{m^3}{6b_i^{2}} e^{-m / b_i}}.$$

 \noindent Taking $m = \alpha b_i^{2/3}$ for $\alpha \geq 4$, we obtain that this probability is at most
$$ 2\left( \frac{1}{e} \right)^{\frac{\alpha^3}{6} e^{-\alpha / b_i^{1 / 3}}} \leq \left( \frac{1}{e} \right)^{\alpha^3 / 54}.$$

\noindent Therefore, the expected number of ball thrown into bins from $B_i$ until some such bin contains three balls is at most $3 \alpha b_i^{2/3}$.
Taking $m = \alpha (\log n)^{1 / 3} b_i^{2 / 3}$, we obtain that the probability that no bin receives three balls within the first
$m$ ball throws in $B_i$ is at most $(1 / n)^{\alpha^3 / 54}$.
\end{proof}

The above claims bound the number of steps inside the sets $A_i$ and $B_i$ necessary to finish the phase. On the other hand, notice that a step
throws a new ball into a bin from $A_i$ with probability $a_i / n$, and throws it into a bin in $B_i$ with probability $b_i / n$.
It therefore follows that the expected number of steps for a bin in $A_i$ to reach three balls (starting from one ball in each bin)
is at most $2 \alpha \sqrt{a_i} n / a_i = 2 \alpha n / \sqrt{a_i}$.
The expected number of steps for a bin in $B_i$ to reach three balls is at most $3 \alpha b_i^{2 / 3} n / b_i = 3 \alpha n / b_i^{1 / 3}$.
The next claim provides concentration bounds for these inequalities, and completes the proof of the Lemma.
\begin{claim}
 The probability that the system takes more than $2 \alpha \frac{n}{\sqrt{a_i}} \sqrt{\log n}$ steps in a phase
 is at most $1 / n^\alpha$.
 The probability that the system takes more than $2 \alpha \frac{n}{b_i^{1/3}} (\log n)^{1 / 3}$ steps in a phase is at most $1 / n^\alpha$.
\end{claim}
\begin{proof}
 Fix a parameter $\beta > 0$.
 By a Chernoff bound, the probability that the system takes more than $2 \beta n / a_i$ steps without throwing at least $\beta$ balls into the bins in $A_i$
 is at most $(1 / e)^{\beta}$.
 At the same time, by Claim~\ref{cl:aclaim}, the probability that $\alpha \sqrt{a_i \log n}$ balls thrown into bins in $A_i$
 do not generate a collision (finishing the phase) is at most $1 / n^{\alpha^2}$.

 Therefore, throwing $2 \alpha \frac{n}{\sqrt{a_i}} \sqrt{\log n}$ balls fail to finish the phase with probability at most
 $1 / n^{\alpha^2} + 1 / e^{\alpha \sqrt{a_i \log n}}$. Since $a_i \geq \log n$ by the case assumption, the claim follows.

 Similarly, using Claim~\ref{cl:bclaim}, the probability that the system takes more than $2 \alpha (\log n)^{1 / 3} b_i^{2 / 3} n / b_i = 2 \alpha (\log n)^{1 / 3} n / b_i^{1/3}$
 steps without a bin in $B_i$ reaching three balls (in the absence of a reset) is at most
  $(1/e)^{1 + (\log n)^{1/3} b_i^{2/3}} + (1 / n)^{\alpha^3 / 54} \leq (1 / n)^{\alpha}$, since $b_i \geq \log n$.
\end{proof}
We put these results together to obtain that, if $a_i \geq \log n$ and $b_i \geq \log n$,
then the expected length of a phase is $\min( 2 \alpha n / \sqrt{a_i}, 3 \alpha n / b_i^{1 / 3} )$.
The phase length is $2 \alpha \min(\frac{n}{\sqrt{a_i}} \sqrt{\log n}, \frac{n}{b_i^{1/3}} (\log n)^{1 / 3} ),$ with high probability.

It remains to consider the case where either $a_i$ or $b_i$ are less than $\log n$.
Assume $a_i \geq \log n$. Then $b_i \geq n - \log n$. We can therefore apply the above argument for $b_i$, and we obtain that with high probability the phase
finishes in $2 \alpha n (\log n /  b_i)^{1/3}$ steps. This is less than $2 \alpha \frac{n}{\sqrt{a_i}} \sqrt{\log n}$, since $a_i \leq \log n$,
which concludes the claim. The converse case is similar.
\end{proof}

Returning to the proof, we characterize the dynamics of the phases $i \geq 1$ based on the value of $a_i$ at the beginning of the phase.
We say that a phase $i$ is in \emph{the first range} if $a_i \in [n / 3, n]$.
Phase $i$ is in \emph{the second range} if $n/c \leq a_i < n / 3$, where $c$ is a large constant.
Finally, phase $i$ is in \emph{the third range} if $0 \leq a_i < n / c$.
Next, we characterize the probability of moving between phases.

\begin{lemma}
\label{lemranges}
	For $i \geq 1$, if phase $i$ is in the first two ranges, then the probability that phase $i + 1$ is in the third range is at most $1 / n^\alpha$.
	Let $\beta > 2c^2$ be a constant. The probability that $\beta \sqrt n$ \emph{consecutive} phases are in the third range is at most $1 / n^{\alpha}$.
\end{lemma}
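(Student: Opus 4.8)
The plan is to track how the beginning-of-phase quantity $a_i$ (the number of one-ball bins, equivalently processes in the read state) evolves from one phase to the next, combining the phase-length estimates of Lemma~\ref{phaselength} with Chernoff concentration for the balls thrown during a phase. First I would record how $a_{i+1}$ is determined. At a reset the triggering (three-ball) bin is set to one, every two-ball bin is emptied, and every other bin keeps its count, so a bin contributes to $a_{i+1}$ exactly when it holds one ball at the reset (the triggering bin also resets to one). Writing $K_A$ and $K_B$ for the numbers of balls thrown into $A_i$ and $B_i$ during the phase, every bin of $A_i$ receiving at least one extra ball leaves the one-ball level, while every bin of $B_i$ receiving exactly one ball joins it. Letting $C_B$ be the number of $B_i$-bins that receive two or more balls (so $C_B = O(K_B^2/b_i)$), this gives
\[
a_i - K_A + 1 \;\le\; a_{i+1} \;\le\; a_i + K_B + 1, \qquad a_{i+1}-a_i \;\ge\; K_B - K_A - O(C_B).
\]

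For the first claim, suppose phase $i$ is in the first two ranges, so $a_i \ge n/c = \Theta(n)$. Since $\sqrt{a_i}=\Theta(\sqrt n)$, the $A_i$-mechanism is the faster one, and Lemma~\ref{phaselength} gives phase length $L = O(\sqrt{cn\log n})$ with probability $1-1/n^\alpha$. Conditioning on $L$, a Chernoff bound controls $K_A$ and $K_B$, whose conditional means are $La_i/n$ and $Lb_i/n$. In the first range ($a_i \ge n/3$) the crude bound $a_{i+1}\ge a_i - K_A \ge n/3 - O(\sqrt{n\log n})$ already exceeds $n/c$ for $c>3$ and $n$ large, so the gain from $B_i$ is not even needed. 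In the second range ($n/c \le a_i < n/3$, hence $b_i > 2n/3 > a_i$) I would use the drift bound instead: $K_B-K_A$ has positive conditional mean $L(b_i-a_i)/n = \Omega(L)$ and concentrates (with $C_B$ negligible), so $a_{i+1}\ge a_i \ge n/c$ with probability $1-1/n^\alpha$. Either way phase $i+1$ avoids the third range except with probability $1/n^\alpha$.

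For the second claim I would show that inside the third range $a$ has a strictly positive drift bounded away from zero, so it cannot persist there for $\beta\sqrt n$ phases. If $a_i < n/c$ then $b_i > n(1-1/c)$, and Lemma~\ref{phaselength} yields $L = \Omega(\sqrt{cn})$ throughout the range (the $B_i$-mechanism only enlarges $L$). Because $b_i \gg a_i$, the expected gain dominates the loss: $\mathbb{E}[K_B-K_A] = L(b_i-a_i)/n \ge \Omega(\sqrt{cn})$ for $c\ge 4$, and with $C_B$ negligible this gives a per-phase increase of $\Omega(\sqrt{cn})$ with probability $1-1/n^\alpha$. If all of $\beta\sqrt n$ consecutive phases stayed in the third range, summing these increases would raise $a$ by $\Omega(\beta\sqrt n\cdot\sqrt{cn}) = \Omega(\beta\sqrt c\,n)$, while $a$ can grow by at most the range width $n/c < n$ before exiting; the hypothesis $\beta > 2c^2$ leaves ample slack, producing a contradiction. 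A union bound over the $\beta\sqrt n$ phases of the per-phase concentration failures bounds the total failure probability by $\beta\sqrt n/n^\alpha$, which gives the claimed $1/n^\alpha$ after adjusting the constant in the exponent.

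The hard part will be the boundary regime $a_i \approx n/c$: there both $K_A$ and $K_B$ are $\Theta(\sqrt n)$, so the sign of the drift rests on their \emph{difference} rather than on either term, and I must show this difference concentrates tightly enough to survive a union bound over $\Theta(\sqrt n)$ phases. Making this clean requires the sharp $\sqrt{\log n}$ phase-length concentration of Lemma~\ref{phaselength} coupled with Chernoff bounds on $K_A,K_B$ conditioned on the phase length, and care that the case split between the $A_i$- and $B_i$-ending mechanisms (crossing over near $a_i\approx n^{2/3}$) is taken at the right threshold so that the dominant mechanism, and hence the drift estimate, is identified correctly across the entire range.
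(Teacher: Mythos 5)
Your treatment of the first claim essentially reproduces the paper's own argument: in range one the phase is, w.h.p., too short (length $O(\sqrt{n\log n})$ by Lemma~\ref{phaselength}) for $a$ to drop by the required $n(1/3-1/c)$, and in range two one uses the positive drift of $K_B-K_A$ together with Chernoff concentration. This part is sound, modulo the same conditioning subtleties (throws conditioned on the phase ending at a given step are not i.i.d.\ uniform) that the paper itself glosses over.

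The second claim, however, contains a genuine gap. You invoke Lemma~\ref{phaselength} as if it guaranteed $\ell_i = \Omega(\sqrt{cn})$ for \emph{every} third-range phase with probability $1-1/n^\alpha$, and you then union-bound these per-phase failures over the $\beta\sqrt{n}$ phases. But the lower-bound half of Lemma~\ref{phaselength} is only a \emph{constant-probability} statement: the probability that a phase is shorter than $\min\bigl(n/\sqrt{a_i},\, n/b_i^{1/3}\bigr)/\alpha$ is bounded by $1/(4\alpha^2)$, not by $1/n^\alpha$, and this is intrinsic rather than an artifact of the lemma's proof --- in the third range the phase ends as soon as two balls land in the same one-ball bin, so a phase of length at most $\epsilon\sqrt{cn}$ occurs with probability $\Theta(\epsilon^2)$. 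Consequently, a ``per-phase increase of $\Omega(\sqrt{cn})$ with probability $1-1/n^\alpha$'' is simply false (a short phase can yield a tiny increase, or even a decrease: with three balls, two in one $A_i$ bin and one in another, $a$ drops by one), and a union bound over $\beta\sqrt{n}$ phases of a constant-probability failure event is vacuous. The paper's proof is structured precisely to circumvent this: it declares a phase \emph{regular} if its length is at least $\min\bigl(n/\sqrt{a_i}, n/b_i^{1/3}\bigr)/c$, notes this holds only with constant probability $1-1/(4c^2)$, applies a Chernoff bound \emph{across the ensemble of $\beta\sqrt{n}$ phases} to conclude that a constant fraction of them are regular w.h.p., shows each regular phase drives $b$ down by $\Omega(\sqrt{n})$, and separately argues that non-regular (short) phases do not, w.h.p., move $a$ the wrong way. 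Your final contradiction --- cumulative drift $\Omega(\beta n)$ exceeding the range width $n/c$ --- is the same as the paper's, but reaching it requires this two-tier argument (constant-probability per-phase progress amplified by concentration over the phases, plus control of the short phases), not the per-phase high-probability bound your proposal rests on.
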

\begin{proof}
We first bound the probability that a phase moves to the third range from one of the first two ranges. 
\begin{claim}
\label{cl:stay}
 For $i \geq 1$, if phase $i$ is in the first two ranges, then the probability that phase $i + 1$ is in the third range is at most $1 / n^\alpha$.
\end{claim}
\begin{proof}
 We first consider the case where phase $i$ is in range two, i.e. $n / c \leq a_i < n / 3$, and bound the probability that $a_{i + 1} < n / c$.
 By Lemma~\ref{phaselength}, the total number of system steps taken in phase $i$ is at most $2 \alpha \min(n/\sqrt{a_i} \sqrt{\log n}, n/b_i^{1/3} (\log n)^{1 / 3} ),$
  with probability at least $1 - 1 / n^\alpha$.
  Given the bounds on $a_i$, it follows by calculation that the first factor is always the minimum in this range.

  Let $\ell_i$ be the number of steps in phase $i$. Since $a_i \in [n / c, n / 3)$, the expected number of balls thrown into bins from $A_i$ is at most
  $\ell_i / 3$, whereas the expected number of balls thrown into bins from $B_i$ is at least $2\ell_i / 3$.
  The parameter $a_{i + 1}$ is $a_i$ plus the bins from $B_i$ which acquire a single ball, minus the balls from $A_i$ which acquire an extra ball.
  On the other hand, the number of bins from $B_i$ which acquire a single ball during $\ell_i$ steps is tightly concentrated around $2 \ell_i / 3$, whereas
  the number of bins in $A_i$ which acquire a single ball during $\ell_i$ steps is tightly concentrated around $\ell_i / 3$.
  More precisely, using Chernoff bounds,  given  $a_i \in [n / c, n / 3)$, we obtain that $a_i \geq a_{i + 1}$, with probability at least
  $1 - 1 / e^{\alpha \sqrt{n}}$.

  For the case where phase $i$ is in range one, notice that, in order to move to range three, the value of $a_i$ would have to decrease by at least $n ( 1 / 3 - 1 / c)$
  in this phase. On the other hand, by Lemma~\ref{phaselength}, the length of the phase is at most $2\alpha\sqrt{3n \log n}$, w.h.p.
  Therefore the claim follows.
  A similar argument provides a lower bound on the length of a phase.
\end{proof}

\noindent The second claim suggests that, if the system is in the third range (a low probability event), it gradually returns to one of the first two ranges.
\begin{claim}
\label{cl:move}
  Let $\beta > 2c^2$ be a constant. The probability that $\beta \sqrt n$ phases are in the third range is at most $1 / n^{\alpha}$.
\end{claim}
\begin{proof}
 Assume the system is in the third range, i.e. $a_i \in [ 0, n / c)$.
 Fix a  phase $i$, and let $\ell_i$ be its length.
 Let $S_b^i$ be the set of bins in $B_i$ which get a single ball during phase $i$.
 Let $T_b^i$ be the set of bins in $B_i$ which get two balls during phase $i$ (and are reset).
 Let $S_a^i$ be the set of bins in $A_i$ which get a single ball during phase $i$ (and are also reset).
 Then $b_{i} - b_{i + 1} \geq |S_b^i| - |T_b^i| - |S_a^i|$.

 We bound each term on the right-hand side of the inequality.
 Of all the balls thrown during phase $i$, in expectation at least $(1 - 1 / c)$ are thrown in bins from $B_i$.
 By a Chernoff bound, the number of balls thrown in $B_i$ is at least $(1 - 1 / c) (1 - \delta) \ell_i$ with probability
 at least $1 - \exp({-\delta^2 \ell_i  (1 - 1 / c)/ 4})$, for $\delta \in (0, 1)$.
 On the other hand, the majority of these balls do not cause collisions in bins from $B_i$.
 In particular, from the Poisson approximation, we obtain that $|S_b^i| \geq 2 |T_b^i|$ with probability at least
 $1 - (1 / n)^{\alpha + 1}$, where we have used $b_i \geq n ( 1 - 1 / c)$.

 Considering $S_a^i$, notice that, w.h.p., at most $(1 + \delta) \ell_i / c$ balls are thrown in bins from $A_i$.
 Summing up, given that  $\ell_i \geq \sqrt{n} / c$, we obtain that $b_{i} - b_{i + 1} \geq (1 - 1 / c) (1 - \delta ) \ell_i / 2 -
 (1 + \delta) \ell_i / c$,
 with probability at least $1 - \max( (1 / n)^{\alpha}, \exp(-{\delta^2 \ell_i (1 - 1 / c)/4 })$.
 For small $\delta \in (0, 1)$ and $c \geq 10$, the difference is at least $\ell_i / c^2$.
 Notice also that the probability depends on the length of the phase.

 We say that a phase is \emph{regular} if its length is at least
 $\min ( n / \sqrt{a_i}, n / (b_i)^{1 / 3}) / c$. From Lemma~\ref{phaselength}, the probability that
 a phase is regular is at least $1 - 1 / (4c^2)$. Also, in this case, $\ell_i \geq \sqrt{n} / c$, by calculation.
 If the phase is regular, then the size of $b_i$ decreases by $\Omega( \sqrt {n})$, w.h.p.

 If the phase is not regular, we simply show that, with high probability, $a_i$ does not decrease.
 Assume $a_i < a_{i + 1}$. Then, either $\ell_i < \log n$, which occurs with probability at most $1 / n^{\Omega(\log n)}$ by Lemma~\ref{phaselength},
 or the inequality $b_i - b_{i + i} \geq \ell_i / c^2$ fails, which also occurs with probability at most $1 / n^{\Omega(\log n)}$.

 To complete the proof, consider a series of $\beta \sqrt{n}$ consecutive phases, and assume that $a_i$ is in the third range for all of them.
 The probability that such a phase is regular is at least $1 - 1 / (4c^2)$, therefore, by Chernoff, a constant fraction of phases are regular, w.h.p.
 Also w.h.p., in each such phase the size of $b_i$ goes down by $\Omega(\sqrt{n})$ units.
 On the other hand, by the previous argument, if the phases are not regular, then it is still extemely unlikely that $b_i$ increases for the next phase.
 Summing up, it follows that the probability that the system stays in the third range for $\beta \sqrt{n}$ consecutive phases is at most $1 / n^{\alpha}$,
 where $\beta \geq 2c^2$, and $\alpha \geq 4$ was fixed initially.
 \end{proof}
 \noindent This completes the proof of Lemma~\ref{lemranges}. 
\end{proof}

\paragraph{Final argument.} To complete the proof of Theorem~\ref{thmglobal}, recall that we are interested in the expected length of a phase.
To upper bound this quantity, we group the states of the game according to their range as follows:
state $S_{1,2}$ contains all states $(a_i, b_i)$ in the first two ranges, i.e. with $a_i \geq n / c$.
State $S_3$ contains all states $(a_i, b_i)$ such that $a_i < n / c$.
The expected length of a phase starting from a state in $S_{1, 2}$ is $O( \sqrt{n} )$, from Lemma~\ref{phaselength}.
However, the phase length could be $\omega(\sqrt{n})$ if the state is in $S_3$. We can mitigate this fact given that
the probability of moving to range three is low (Claim~\ref{cl:stay}), and the system moves away from range three rapidly (Claim~\ref{cl:move}):
intuitively, the probability of states in $S_3$ in the stationary distribution has to be very low.

To formalize the argument, we define two Markov chains.
The first Markov chain $M$ has two states, $S_{1, 2}$ and $S_3$. The transition probability from $S_{1, 2}$ to $S_3$ is $1 / n^{\alpha}$,
whereas the transition probability from $S_3$ to $S_{1,2}$ is $x > 0 $, fixed but unknown.
Each state loops onto itself, with probabilities $1 - 1 / n^{\alpha}$ and $1 - x$, respectively.
The second
Markov chain $M'$ has two states $S$ and $R$. State $S$ has a transition to $R$, with probability $\beta \sqrt{n} / n^{\alpha}$,
and a transition to itself, with probability $1 - \beta\sqrt{n} / n^{\alpha}$.
State $R$ has a loop with probability $1 / n^{\alpha}$, and a transition to $S$, with probability $1 - 1 / n^{\alpha}$.

It is easy to see that both Markov chains are ergodic. Let  $[s \,\, r]$ be the stationary distribution of $M'$.
Then, by straightforward calculation, we obtain that $s \geq 1 - \beta \sqrt n / n^\alpha$, while $r \leq \beta \sqrt n / n^\alpha$.

On the other hand, notice that the probabilities in the transition matrix for $M'$ correspond to the probabilities in the
transition matrix for $M^{\beta \sqrt{n}}$, i.e. $M$ applied to itself $\beta \sqrt{n}$ times.
This means that the stationary distribution for $M$ is the same as the stationary distribution for $M'$.
In particular, the probability of state $S_{1, 2}$ is at least $1 - \beta \sqrt{n} / n^{\alpha}$,
and the probability of state $S_3$ is at most $\beta \sqrt{n}$.

To conclude, notice that the expected length of a phase is at most the expected length of a phase in the first Markov chain $M$.
Using the above bounds, this is at most $2 \alpha \sqrt{n}  (1 - \beta \sqrt{n} / n^{\alpha}) + \beta n^{2 / 3} \sqrt{n} / n^{\alpha} = O( \sqrt n )$, as claimed.
This completes the proof of Theorem~\ref{thmglobal}.

\subsection{Parallel Code}
\label{sec:parallel-code}

We now use the same framework to derive a convergence bound for parallel code, i.e. a method call which completes after the process
executes $q$ steps, irrespective the concurrent actions of other processes.
The pseudocode is given in Algorithm~\ref{fig:parallel}.

\setcounter{AlgoLine}{0}
\begin{algorithm}[ht]
\textbf{Shared}: register $R$\;
\Indp
	 \textbf{procedure} $\lit{call}()$
	
	 \While{ $\lit{true}$ }
	 {
	    \For{ $i$ from $1$ to $q$ }
	    {
	    	Execute $i$th step\;
		}
		\textbf{output} $\id{success}$\;
	 }
\Indm
\caption{Pseudocode for parallel code.}
\label{fig:parallel}
\end{algorithm}

\paragraph{Analysis.}
We now analyze the individual and system latency for this algorithm under the uniform stochastic scheduler. Again, we start from its Markov chain representation.
We define the individual Markov chain $M_I$ to have states $S = (C_1, \ldots, C_n)$, where $C_i \in \{0, \ldots, q - 1\}$ is the current step counter for
process $p_i$. At every step, the Markov chain picks $i$ from $1$ to $n$ uniformly at random and transitions into the state
$(C_1, \ldots, (C_i + 1) \mod q, \ldots, C_n)$. A process registers a success every time its counter is reset to $0$; the system registers a success
every time some process counter is reset to $0$.
The system latency is the expected number of system steps between two successes, and the individual latency is the expected number of system steps between two
successes by a specific process.

We now define the system Markov chain $M_S$, as follows. A state $g \in M_S$ is given by $q$ values $(v_0, v_1, \ldots, v_{q - 1})$,
where for each $j \in \{0, \ldots, q - 1\}$ $v_j$ is the number of processes with step counter value $j$, with the condition that $\sum_{j = 0}^{q - 1} v_j = n$.
Given a state $(v_0, v_1, \ldots, v_{q - 1})$, let $X$ be the set of indices $i \in \{0, \ldots, q - 1\}$ such that $v_i > 0$.
Then, for each $i \in X$, the system chain transitions into the state $(v_0, \ldots, v_i - 1, v_{i + 1} + 1, \ldots, v_{q - 1})$ with probability $v_i / n$.

It is easy to check that both $M_I$ and $M_S$ are ergodic Markov chains.
Let $\pi$ be the stationary distribution of $M_S$, and $\pi'$ be the stationary distribution of $M_I$.
We next define the mapping $f : M_I \rightarrow M_S$ which maps each state $S = (C_1, \ldots, C_n)$ to the
 state $(v_0, v_1, \ldots, v_{q - 1})$, where $v_j$ is the number of processes with counter value $j$ from $S$.
 Checking that this mapping is a lifting between $M_I$ and $M_S$ is straightforward.
 \begin{lemma}
 \label{lem:lifting-uniform}
  The function $f$ defined above is a lifting between the ergodic Markov chains $M_I$ and $M_S$.
 \end{lemma}

\noindent We then obtain bounds on the system and individual latency. 
 \begin{lemma}
 \label{lemparallel}
  For any $1 \leq i \leq n$, the individual latency for process $p_i$ is $W_i = nq$.
  The system latency is $W = q$. 
 \end{lemma}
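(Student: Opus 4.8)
The plan is to exploit the fact that, in the parallel code of Algorithm~\ref{fig:parallel}, the processes never actually interact: the register $R$ is written but never read back for coordination, so each counter $C_i$ evolves as an independent random walk on $\{0,1,\ldots,q-1\}$ driven only by whether $p_i$ is the scheduled process (an event of probability $1/n$, independent across system steps). Consequently the individual chain $M_I$ is a product of $n$ identical one-dimensional chains, and I can analyze a single process in isolation. This independence is what makes the lemma the easy companion to Theorem~\ref{thmglobal}: no balls-into-bins machinery is required.

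First I would compute the individual latency $W_i$ by a direct renewal argument. Between two consecutive successes of $p_i$, process $p_i$ must take exactly $q$ of its own steps, since its counter must cycle once through $0,1,\ldots,q-1$ and back to $0$. Each own-step of $p_i$ lands on a system step only when $p_i$ is scheduled, which happens independently with probability $1/n$; hence the number of system steps separating two consecutive steps of $p_i$ is geometric with mean $n$. The number of system steps between two successes of $p_i$ is therefore a sum of $q$ independent geometric$(1/n)$ waiting times, whose expectation is $nq$ by linearity of expectation (equivalently, by Wald's identity). Because these inter-success intervals are i.i.d.\ and their law does not depend on the states of the other processes, this expectation coincides with the stationary individual latency, giving $W_i = nq$.

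I would then derive the system latency $W = q$ by the same symmetry-plus-rate argument used in Lemma~\ref{lemwork}. Let $\eta$ be the stationary probability that a system step is a success by a fixed process $p_i$; by the symmetry of the processes (cf.\ Lemma~\ref{lem:symmetry}) this quantity is identical for every $i$. Since each system step is a success for at most one process, the probability $\mu$ that a system step is a success by \emph{some} process satisfies $\mu = \sum_{i=1}^n \eta = n\eta$. Invoking the extension of Theorem~\ref{thm:stat} that identifies mean return times with reciprocal stationary rates—exactly as in Lemma~\ref{lemwork}—we have $W_i = 1/\eta$ and $W = 1/\mu$, so $W = W_i/n = nq/n = q$. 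Equivalently, one may phrase this as a renewal-reward computation: over a long horizon each process completes operations at rate $1/(nq)$ per system step, so the aggregate system rate is $n \cdot 1/(nq) = 1/q$, whence $W = q$.

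The step I would take most care with is pinning the elementary per-step rates to the \emph{stationary} definitions of latency rather than to a particular start state; this is the only point requiring the chain machinery, and it is handled precisely as in the scan-validate analysis. Both $M_I$ and $M_S$ are ergodic, so the correspondence between mean return times and stationary probabilities (the extension of Theorem~\ref{thm:stat} already invoked for Lemma~\ref{lemwork}) converts the success probabilities $\eta$ and $\mu$ into the claimed expected gaps, completing the proof.
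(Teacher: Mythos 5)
Your proof is correct, but it takes a genuinely different route from the paper's. The paper works entirely inside the Markov-chain framework: it observes that the individual chain $M_I$ is doubly stochastic (every state has in- and out-degree $n$ with uniform transition probabilities $1/n$), concludes that its stationary distribution is \emph{uniform}, counts that exactly a $1/(nq)$ fraction of transitions reset the counter of a fixed process $p_i$ to get $W_i = nq$, and then invokes the lifting $f : M_I \rightarrow M_S$ to read off the system success rate $1/q$. You avoid computing any stationary distribution: your renewal argument (two consecutive successes of $p_i$ are separated by exactly $q$ own-steps, each costing a geometric$(1/n)$ number of system steps, so $W_i = nq$ by Wald) uses only the marginal law of $p_i$'s schedule, and your system bound comes from summing the disjoint per-process success rates by symmetry ($\mu = n\eta$), which re-derives the $W = W_i/n$ relation without the lifting machinery. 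Your approach is more elementary and self-contained, and it isolates the real reason the parallel case is easy — each process's progress is autonomous under the uniform scheduler; the paper's approach buys consistency with the general framework (the same lifting argument used for the scan-validate component) and explicitly exhibits the uniform stationary distribution, which is what it reuses when composing the preamble chain with the loop chain in the proof of Theorem~\ref{thmscu}. One small caveat: your opening claim that $M_I$ is a \emph{product} of $n$ independent one-dimensional chains is not literally true (exactly one coordinate moves per step, so the coordinates are negatively correlated), but this does not affect your argument, since the renewal computation uses only the marginal dynamics of a single process and the rate-summation step needs no independence at all.
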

\begin{proof}
 We examine the stationary distributions of the two Markov chains. Contrary to the previous examples, it turns out that in this case it is easier to determine
  the stationary distribution of the individual Markov chain $M_I$. Notice that, in this chain, all states have in- and out-degree $n$, and the transition probabilities are uniform
  (probability $1/ n$). It therefore must hold that the stationary distribution of $M_I$ is \emph{uniform}.
  Further, notice that a $1 / nq$ fraction of the edges corresponds to the counter of a specific process $p_i$ being reset. Therefore, for any $i$,
  the probability that a step in $M_I$ is a completed operation by $p_i$ is $1 / nq$. Hence, the individual latency for the algorithm is $nc$.
  To obtain the system latency, we notice that, from the lifting, the probability that a step in $M_S$ is a completed operation by \emph{some} process is $1 / q$.
  Therefore, the individual latency for the algorithm is $q$.
\end{proof}

\subsection{General Bound for $\SCU(q, s)$}

We now put together the results of the previous sections to obtain a bound on individual and system latency.
First, we notice that Theorem~\ref{thmglobal} can be easily extended to the case where the loop contains $s$ scan steps, as the extended local state of a process $p$ can be changed by a step of another process $q\neq p$ only if $p$ is about to perform a CAS operation.

\begin{corollary}
\label{cor:scan-validate}
 For $s \geq 1$, given a scan-validate pattern with $s$ scan steps under the stochastic scheduler, the system latency is $O( s \sqrt{n})$,
 while the individual latency is $O( n s \sqrt n )$. 
\end{corollary}

Obviously, an algorithm in $\SCU(q, s)$ is a sequential composition of parallel code followed by $s$ loop steps.
Fix a process $p_i$. By Lemma~\ref{lemparallel} and Corollary~\ref{cor:scan-validate}, by linearity of expectation,
we obtain that the expected individual latency for process $p_i$ to complete an operation is at most $n( q + \alpha s \sqrt n )$, where $\alpha \geq 4$ is a constant. 

Consider now the Markov Chain $M_S$ that corresponds to the sequential composition of the Markov chain for the parallel code $M_P$,
and the Markov chain $M_L$ corresponding to the loop. In particular, a completed operation from $M_P$ does not loop back into the chain,
but instead transitions into the corresponding state of $M_L$.
More precisely, if the transition is a step by some processor $p_i$ which completed step number $q$ in the parallel code
(and moves to the loop code), then the chain transitions into the state where processor $p_i$ is about to execute the first step of the loop code.
Similarly, when a process performs a successful CAS at the end of the loop,
the processes' step counter is reset to $0$, and its next operation will the first step of the preamble.

It is straightforward that the chain $M_S$ is ergodic.
Let $\kappa_i$ be the probability of the event that process $p_i$ completes an operation in the stationary distribution of the chain $M_S$.
Since the expected number of steps $p_i$ needs to take to complete an operation is at most $n( q + \alpha \sqrt n)$,
we have that $\kappa_i \geq 1 / (n( q + \alpha s \sqrt n))$.
Let $\kappa$ be the probability of the event that \emph{some} process completes an operation in the stationary distribution of the chain $M_S$.
It follows that $\kappa = \sum_{i = 1}^{n} \kappa_i \geq 1 / (q + \alpha s \sqrt n)$.
Hence, the expected time until the system completes a new operation is at most $q + \alpha s \sqrt n$, as claimed.

We note that the above argument also gives an upper bound on the expected number of (individual) steps a process $p_i$ 
 needs to complete an operation (similar to the standard measure of individual \emph{step complexity}). 
Since the scheduler is uniform, this is also $O( q + s \sqrt n )$.  
Finally, we note that, if only $k \leq n$ processes are correct in the execution, we obtain the same latency bounds in terms of $k$: 
since we consider the behavior of the algorithm at infinity, the stationary latencies are only influenced by correct processes. 

\begin{corollary}
\label{cor:fails}
  Given an algorithm in $\SCU(q, s)$ on $k$ correct processes under a uniform stochastic scheduler, the system latency is $O( q + s \sqrt k)$,
  and the individual latency is $O( k ( q + s \sqrt k ) )$.
\end{corollary}

\section{Application - A Fetch-and-Increment Counter using Augmented CAS}
\label{sec:applications}

We now apply the ideas from the previous section to obtain minimal and maximal progress bounds
for other lock-free algorithms under the uniform stochastic scheduler.

%\subsection{A Fetch-and-Increment Counter using Augmented CAS}
%\label{sec:fai}

Some architectures support richer semantics for the CAS operation, which return the \emph{current} value of the
register which the operation attempts to modify. We can take advantage of this property to obtain a simpler
fetch-and-increment counter implementation based on compare-and-swap. This type of counter implementation is very widely-used~\cite{DiceLM13}.

\begin{algorithm}[ht]
\textbf{Shared}: register $R$\;
\Indp
	 \textbf{procedure} $\lit{fetch-and-inc}()$
	 $v \gets 0$\;
	 \While{ $\lit{true}$ }
	 {
	    $old \gets v$\;
	    $v \gets \lit{CAS}( R, v, v + 1 )$\;
	    \If{ $v = old$ }
	    {
	      \textbf{output} $\id{success}$\;
	    }
	 }
\Indm
\label{fig:fai}
\caption{A lock-free fetch-and-increment counter based on compare-and-swap.}
\end{algorithm}

\subsection{Markov Chain Representations}

We again start from the observation the algorithm induces an individual Markov chain and a global one.
From the point of view of each process, there are two possible states: \emph{Current}, in which the process has the \emph{current} value
(i.e. its local value $v$ is the same as the value of the register $R$),
and the \emph{Stale} state, in which the process has an old value, which will cause its CAS call to fail.
(In particular, the \emph{Read} and \emph{OldCAS} states from the universal construction are coalesced.)

\paragraph{The Individual Chain.} The per-process chain, which we denote by $M_I$, results from the composition of the automata
representing the algorithm at each process.
Each state of $M_I$ can be characterized by the set
of processes that have the current value of the register $R$.
The Markov chain has $2^n - 1$ states, since it never happens that
\emph{no thread} has the current value.

For each non-empty subset of processes $S$, let $s_S$ be the corresponding state.
The initial state is $s_\Pi$, the state in which every thread has the current value.
We distinguish \emph{winning} states as the states $(s_{\{p_i\}})_i$ in which only \emph{one} thread has the current value: to reach this state,
one of the processes must have successfully updated the value of $R$. There are exactly $n$ winning states, one for each process.

Transitions are defined as follows. From each state $s$, there are $n$ outgoing edges, one for each process which could be scheduled next.
Each transition has probability $1 / n$, and moves to state $s'$ corresponding to the set of processes which have the current value at the next time step.
Notice that the winning states are the only states with a self-loop, and that from every state $s_S$ the chain either moves to a state
$s_V$ with $|V| = |S| + 1$, or to a winning state for one of the threads in $S$.

\paragraph{The Global Chain.} The \emph{global} chain $M_G$ results from clustering the symmetric states states from $M_I$ into single states.
The chain has $n$ states $v_1, \ldots, v_n$, where  state $v_i$ comprises all the states $s_{S}$ in $M_G$ such that $|S| = i$.
Thus, state $v_1$ is the state in which \emph{some} process just completed a new operation.
In general, $v_i$ is the state in which $i$ processes have the current value of $R$
(and therefore may commit an operation if scheduled next).

The transitions in the global chain are defined as follows. For any $1 \leq i \leq n$, from state $v_i$ the chain moves to state $v_1$
with probability $i / n$. If $i < n$, the chain moves to state $v_{i + 1}$ with probability $1 - i / n$. Again, the state $v_1$ is the only state with a self-loop.
The intuition is that some process among the $i$ possessing the current value wins if scheduled next (and changes the current value); otherwise,
if some other thread is scheduled, then that thread will also have the current value.

\subsection{Algorithm Analysis}

We analyze the stationary behavior of the algorithm under a uniform stochastic scheduler, assuming each process invokes an infinite number of operations.

\paragraph{Strategy.} We are interested in the expected number of steps that some process $p_i$ takes between committing two consecutive operations, in the stationary distribution.
This is the \emph{individual latency}, which we denote by $W_i$.
As for the general algorithm, we proceed by first bounding the system latency $W$, which is easier to analyze, and then show that  $W_i = n W$, i.e. the algorithm is \emph{fair}.
We will use the two Markov chain representations from the previous section.
In particular, notice that $W_i$ is the expected return time of the ``win state" $v_i$ of the global chain $M_G$, and $W$ is the expected return time of the state $s_{p_i}$ in
which $p_i$ just completed an operation.

The first claim is an upper bound on the return time for $v_1$ in $M_G$.
\begin{lemma}
 The expected return time for $v_1$ is $W \leq 2 \sqrt{n} $.
\end{lemma}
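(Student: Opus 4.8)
The plan is to read $W$ directly as the expected return time of state $v_1$ in $M_G$ and to exploit the ``ladder'' structure of the chain. Starting from $v_1$, every transition either jumps back to $v_1$ or climbs one rung, so an excursion away from $v_1$ visits $v_1, v_2, v_3, \ldots$ in order, returning to $v_1$ from $v_j$ with probability $j/n$ (and otherwise advancing to $v_{j+1}$); the self-loop at $v_1$ is just the $j=1$ case of an immediate return. Writing $T$ for the return time, the event $\{T > k\}$ means that the first $k$ transitions all advanced the ladder, which by the Markov property has probability $\prod_{i=1}^{k}(1 - i/n)$; this vanishes once $k \geq n$, since the factor $1 - n/n = 0$ forces a return from $v_n$. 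Hence
$$ W = E[T] = \sum_{k=0}^{\infty} \Pr[T > k] = \sum_{k=0}^{n-1} \prod_{i=1}^{k} \left(1 - \frac{i}{n}\right),$$
with the convention that the empty product ($k = 0$) equals $1$.

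Next I would bound each product using $1 - x \leq e^{-x}$, which gives
$$ \prod_{i=1}^{k} \left(1 - \frac{i}{n}\right) \leq \exp\!\left(-\frac{1}{n}\sum_{i=1}^{k} i\right) = e^{-k(k+1)/(2n)} \leq e^{-k^2/(2n)},$$
so that $W \leq \sum_{k=0}^{\infty} e^{-k^2/(2n)}$ (extending the range to $\infty$ only adds nonnegative terms). This is exactly the tail behaviour of the birthday problem: the decay in $k^2/n$ makes the effective number of terms $\Theta(\sqrt n)$, which is the source of the $\sqrt n$ bound.

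Finally, since $x \mapsto e^{-x^2/(2n)}$ is decreasing on $[0,\infty)$, the integral comparison yields
$$ \sum_{k=0}^{\infty} e^{-k^2/(2n)} \leq 1 + \int_0^{\infty} e^{-x^2/(2n)}\,dx = 1 + \sqrt{\frac{\pi n}{2}}.$$
For $n \geq 2$ one checks that $1 + \sqrt{\pi n / 2} \leq 2\sqrt n$ (since $\sqrt{\pi/2} \approx 1.25 < 2$ leaves slack for the additive constant once $\sqrt n \geq 1.34$), and the case $n = 1$ is immediate because then the chain has a single state and $W = 1 \leq 2$. This establishes $W \leq 2\sqrt n$.

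The routine parts are the excursion identity and the $1 - x \leq e^{-x}$ estimate; the one place needing care is pinning down the constant in the last step, i.e. verifying that the Gaussian integral $\sqrt{\pi n/2} \approx 1.25\sqrt n$ together with the $+1$ correction from the integral test still fits under the target $2\sqrt n$ for every $n \geq 2$. If a cleaner constant were wanted, the sum $\sum_{k \geq 0} \prod_{i=1}^{k}(1 - i/n)$ is (a shift of) the Ramanujan $Q$-function, whose asymptotics $\sqrt{\pi n/2} + O(1)$ are classical and could be invoked directly.
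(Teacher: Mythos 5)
Your proof is correct, but it takes a different route from the paper's. The paper works with the hitting times $Z(i)$ (from the state where $n-i$ processes hold the current value) via the recurrence $Z(i) = \frac{i}{n}Z(i-1) + 1$, and proves $Z(n-1) \leq 2\sqrt{n}$ by induction split into two ranges: for $k \leq n - \sqrt{n}$ the inductive step $Z(k) \leq Z(k-1)\frac{n-\sqrt{n}}{n} + 1$ keeps $Z(k) \leq \sqrt{n}$, and over the final at most $\sqrt{n}$ indices each step adds at most $1$, giving $2\sqrt{n}$. Your argument instead unrolls that same recurrence into the explicit tail sum $W = \sum_{k=0}^{n-1}\prod_{i=1}^{k}\left(1 - \frac{i}{n}\right)$ and bounds it analytically via $1 - x \leq e^{-x}$ and the integral comparison $\sum_{k\geq 0} e^{-k^2/(2n)} \leq 1 + \sqrt{\pi n/2}$. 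Both are sound and exploit the same ladder structure of the chain (every transition either returns to $v_1$ or climbs one rung), but they trade off differently: the paper's induction is entirely elementary, requiring no closed form and no integral estimates, while yours yields the sharper bound $1 + \sqrt{\pi n/2} \approx 1.25\sqrt{n}$ and makes the birthday-paradox nature of the quantity explicit — indeed it recovers, up to the $O(1)$ term, exactly the Ramanujan $Q$-function asymptotics $\sqrt{\pi n/2}\,(1+o(1))$ that the paper only cites in the remark following this lemma. One small point of care in your write-up, which you handled correctly: the tail identity $\Pr[T > k] = \prod_{i=1}^{k}(1 - i/n)$ needs the observation that after $k$ consecutive advances the chain sits at $v_{k+1}$, and the product's vanishing factor at $i = n$ covers the forced return from $v_n$, so the sum legitimately truncates at $k = n-1$.
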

\begin{proof}
 For $0 \leq i \leq n - 1$, let $Z(i)$ be the hitting time for state $v_1$ from the state where $n - i$ processes have the current value.
 In particular, $Z(0)$ is the hitting time from the state where \emph{all} processes have the correct value, and therefore $Z(0) = 1$.
 Analyzing the transitions, we obtain that $Z(i) = i Z( i - 1 ) / n + 1$. We prove that  $Z(n - 1) \leq 2\sqrt{n}$.

 We analyze two intervals: $k$ from $0$ to $n - \sqrt n$, and then up to $n - 1$.
  We first claim that, for $0 \leq k \leq n - \sqrt n$, it holds that $Z(k) \leq \sqrt n$. We prove this by induction.
  The base case obviously holds. For the induction step, notice that $Z ( k ) \leq Z ( k - 1 ) (n - \sqrt n ) / n + 1$ in this interval.
By the hypothesis, $Z(k - 1) \leq \sqrt n$, therefore $Z(k) \leq \sqrt{n}$ for $k \leq n - \sqrt n$.

For $k \in \{ n - \sqrt{n}, \ldots, n\}$, notice that $Z(k)$ can add at most $1$ at each
iteration, and we are iterating at
most $\sqrt n$ times. This gives an upper bound of $2 \sqrt n$, as claimed.
\end{proof}

\paragraph{Remark.} Intuitively, the value $Z(n - 1)$ is related to the birthday paradox, since it counts the number of elements
that must be chosen uniformly at random from $1$ to $n$ (with replacement) until one of the elements appears twice.
In fact, this is the Ramanujan $Q$ function~\cite{Flajolet}, which has been studied previously by
Knuth~\cite{Knuth3} and Flajolet et al.~\cite{Flajolet} in relation to the performance of linear probing hashing. Its asymptotics are known to be
$Z(n - 1) = \sqrt{ \pi n  / 2 } ( 1 + o(1) )$~\cite{Flajolet}.

\paragraph{Markov Chain Lifting.} We now analyze $W_i$, the expected number of total system steps for a specific process $p_i$ to commit a new request.
We define a mapping $f : M_{I} \rightarrow M_{G}$ between the states of the individual Markov chain. For any non-empty set $S$ of processes,
the function maps the state $s_S \in M_I$ to the state $v_i$ of the chain.
It is straightforward to prove that this mapping is a correct lifting of the Markov chain, and that both Markov chains are ergodic.

\begin{lemma}
\label{lem:fai-lifting}
	The individual chain and the local chain are ergodic.
	The function $f$ is a lifting between the individual chain and the global chain.
\end{lemma}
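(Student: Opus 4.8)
The plan is to establish two independent facts about the pair of Markov chains defined for the fetch-and-increment counter: that each is ergodic, and that the map $f : M_I \rightarrow M_G$ satisfies the ergodic-flow homomorphism required by the definition of lifting. Both follow the same template already used for the scan-validate component in Lemma~\ref{lemlifting}, so the argument is largely a transcription with the simplified two-state-per-process structure in mind.

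For ergodicity, I would first argue irreducibility of $M_I$. From any state $s_S$ the chain either moves to a state $s_V$ with $|V| = |S| + 1$ (when a process not holding the current value is scheduled and simply advances toward it), or collapses to a winning state $s_{\{p_i\}}$ for some $p_i \in S$ (when a holder is scheduled and commits). Thus from any state one can reach a winning state, and from a winning state $s_{\{p_i\}}$ one can grow the set of current-value holders one process at a time back up to any target set, so every state is reachable from every other; strong connectivity of the underlying digraph gives irreducibility. Aperiodicity is immediate because the winning states carry self-loops (a sole holder that is rescheduled re-commits and stays a sole holder), and a single self-loop in an irreducible chain forces aperiodicity, as recalled in Section~\ref{sec:markov-chain-background}. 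Since the state space is finite, $M_I$ is ergodic. Ergodicity of $M_G$ follows identically: state $v_1$ has a self-loop, and the transitions $v_i \to v_{i+1}$ and $v_i \to v_1$ make the chain strongly connected.

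For the lifting, I would verify the flow condition $Q_{v_i v_j} = \sum_{x \in f^{-1}(v_i),\, y \in f^{-1}(v_j)} Q'_{xy}$ exactly as in Lemma~\ref{lem:stat} and Lemma~\ref{lemlifting}: first establish the algebraic counterpart of Lemma~\ref{lem:stat}, namely $\pi(v_i) = \sum_{x \in f^{-1}(v_i)} \pi'_x$, by collapsing the stationary equation $\pi' A = \pi'$ across the clusters $f^{-1}(v_i) = \{ s_S : |S| = i \}$ and checking that the induced cross-cluster transition matrix coincides with the transition matrix of $M_G$. This reduces to the combinatorial count: from a fixed state $s_S$ with $|S| = i$, exactly $i$ of the $n$ equiprobable transitions schedule a holder and hence land in a winning (size-one) cluster, while the remaining $n - i$ advance a non-holder and land in the size-$(i+1)$ cluster, matching the prescribed probabilities $i/n$ and $1 - i/n$. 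With $\pi(v_i) = \sum_{x \in f^{-1}(v_i)} \pi'_x$ in hand, and using the per-edge transition probability $1/n$ together with the symmetry that all states in a cluster share the same stationary probability (by the index-swapping argument of Lemma~\ref{lem:symmetry}), the flow identity for each of the two outgoing edge types collapses termwise to $Q_{v_i v_j}$, just as displayed in the proof of Lemma~\ref{lemlifting}.

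The main obstacle, such as it is, lies in the collapse-to-$v_1$ transition rather than the incremental one: unlike the scan-validate chain where a successful CAS sent all $\id{CCAS}$ processes uniformly to $\id{OldCAS}$, here a single holder's commit sends \emph{all} $i$ holders' images to the same winning cluster $v_1$, so several distinct neighbors $y$ of a fixed $x$ map into $f^{-1}(v_1)$, and I must confirm these are counted with the right multiplicity. Concretely, from $x = s_S$ there are $i$ neighbors landing in $f^{-1}(v_1)$ (one per choice of committing holder), each with transition probability $1/n$, giving aggregate flow $(i/n)\pi'_x$ out of $x$ toward the winning cluster; summing over $x \in f^{-1}(v_i)$ and applying Lemma~\ref{lem:symmetry} yields $(i/n)\pi(v_i) = Q_{v_i v_1}$, which is exactly the flow the global chain assigns. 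Once this bookkeeping is checked, the lifting condition holds for every pair $(v_i, v_j)$ and the lemma is proved.
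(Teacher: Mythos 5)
Your proof is correct and is precisely the argument the paper intends: the paper gives no proof of this lemma beyond declaring it straightforward, deferring to the scan-validate analysis, and your proposal is a faithful transcription of that template (ergodicity from self-loops at the winning states plus strong connectivity; the lifting from the cluster-collapse of $\pi' A = \pi'$ as in Lemma~\ref{lem:stat}, then the flow bookkeeping as in Lemma~\ref{lemlifting}), correctly handling the one point that genuinely differs from the scan-validate case, namely that all $i$ commit-edges out of a state $s_S$ land in the single cluster $f^{-1}(v_1)$ and must be counted with multiplicity $i$. One minor imprecision in your irreducibility step: from $s_{\{p_i\}}$, growing alone reaches only sets containing $p_i$, so to reach a target set $T$ with $p_i \notin T$ you need the extra detour of growing to $s_{\{p_i, p_j\}}$ for some $p_j \in T$ and then collapsing to $s_{\{p_j\}}$ --- a two-step patch using only transitions you have already identified, so nothing essential is missing.
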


\noindent We then use the lifting and symmetry to obtain the following relation between the stationary distributions of the two Markov chains.
The proof is similar to that of Lemma~\ref{lemlifting}. This also implies that every process takes the same number of steps in expectation until completing an operation.
\begin{lemma}
	Let $\pi = [\pi_1 \ldots \pi_n]$ be the stationary distribution of the global chain, and let $\pi'$ be the stationary distribution of the individual chain.
	Let $\pi'_i$ be the probability of $s_{\{p_i\}}$ in $\pi'$. Then, for all $i \in \{1, \ldots, n\}$, $\pi'_i = \pi / n$.
	Furthermore, $W_i = n W$.
\end{lemma}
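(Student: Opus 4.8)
The plan is to combine the lifting relation between the two stationary distributions with the symmetry of the individual chain, and then convert stationary probabilities into expected return times via Theorem~\ref{thm:stat}.

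First I would establish the stationary identity $\pi'_i = \pi_1/n$. Since $f$ is a lifting between $M_I$ and $M_G$ (Lemma~\ref{lem:fai-lifting}), the general stationary-distribution relation for liftings from \secref{markov-chain-background} gives $\pi_k = \sum_{x \in f^{-1}(v_k)} \pi'_x$ for every $k$. For $k = 1$ the fiber $f^{-1}(v_1)$ is exactly the set of winning states $\{ s_{\{p_1\}}, \ldots, s_{\{p_n\}} \}$, so $\pi_1 = \sum_{i=1}^n \pi'_i$. Next I would argue, exactly as in Lemma~\ref{lem:symmetry}, that swapping two process identifiers $i$ and $j$ is an automorphism of the transition matrix of $M_I$ which carries $s_{\{p_i\}}$ to $s_{\{p_j\}}$; by uniqueness of the stationary distribution the two states must have equal stationary probability. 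Hence all $n$ winning states share a common value, and $\pi_1 = n\,\pi'_i$, i.e. $\pi'_i = \pi_1/n$ for every $i$, proving the first claim.

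Then I would pass to latencies, mirroring the argument of Lemma~\ref{lemwork}. The key observation is that a visit to $s_{\{p_i\}}$ occurs exactly when $p_i$ performs a successful CAS (including the self-loop at $s_{\{p_i\}}$, which corresponds to $p_i$ winning again while being the sole current process); likewise every visit to $v_1$ in $M_G$, self-loop included, marks a success by some process. Therefore $W_i$ is the expected return time $h_{s_{\{p_i\}} s_{\{p_i\}}}$ and $W$ is the expected return time $h_{v_1 v_1}$. Applying Theorem~\ref{thm:stat} to each (ergodic, hence irreducible) chain gives $\pi'_i = 1/W_i$ and $\pi_1 = 1/W$. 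Substituting into $\pi'_i = \pi_1/n$ yields $1/W_i = 1/(nW)$, so $W_i = nW$, as claimed.

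The lifting relation and the conversion to return times are routine given the earlier results. The one step that needs genuine care is the correspondence between \emph{success events} and \emph{visits to a fixed state}: I must verify that each success by $p_i$ maps bijectively onto an entry into $s_{\{p_i\}}$ and that the self-loops are counted correctly, so that the return time really equals the expected number of system steps between two consecutive successes of $p_i$, rather than some constant multiple of it. Once this bookkeeping is pinned down, the remaining algebra is immediate.
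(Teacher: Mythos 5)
Your proof is correct and follows essentially the same route the paper sketches for this lemma: the lifting relation collapses the $n$ winning states onto $v_1$, symmetry under swapping process identifiers makes their stationary probabilities equal (giving $\pi'_i = \pi_1/n$), and Theorem~\ref{thm:stat} converts $\pi'_i$ and $\pi_1$ into the return times $W_i$ and $W$ — exactly the identification of latencies with return times of $s_{\{p_i\}}$ and $v_1$ that the paper itself uses. Your extra care in checking that successes of $p_i$ correspond bijectively to entries into $s_{\{p_i\}}$ (self-loop included) is precisely the bookkeeping that makes the direct application of Theorem~\ref{thm:stat} legitimate here, where the paper's analogous Lemma~\ref{lemwork} instead had to argue via per-step success probabilities.
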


\noindent This characterizes the asymptotic behavior of the individual latency.
\begin{corollary}
	For any $i \in \{1, \ldots, n\}$, the expected number of system steps between two completed operations by process $p_i$ is $O( n \sqrt n)$.
	The expected number of steps by $p_i$ between two completed operations is $O( \sqrt n )$.
\end{corollary}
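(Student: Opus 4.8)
The plan is to obtain both bounds directly by combining the two lemmas immediately preceding this corollary. The first establishes the system-latency bound $W \le 2\sqrt{n}$ on the expected return time of the win state $v_1$ in the global chain $M_G$, and the second, via the lifting $f$ together with the symmetry of the individual chain, establishes the fairness relation $W_i = nW$. Since $W_i$ is by definition the expected number of \emph{system} steps between two completed operations of $p_i$, the first statement of the corollary is then immediate: substituting the system bound into the fairness relation yields
\[
W_i = nW \le n \cdot 2\sqrt{n} = 2 n \sqrt{n} = O(n\sqrt{n}).
\]

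For the second statement, I would convert the count of \emph{system} steps into a count of $p_i$'s \emph{own} steps. Under the uniform stochastic scheduler each system step is a step of $p_i$ with probability exactly $1/n$, and, crucially, this scheduling coin is independent of the current state of the Markov chain. Hence, over the interval of $W_i$ system steps that elapses between two of $p_i$'s completions, the expected number of steps actually taken by $p_i$ is $W_i / n = O(\sqrt{n})$. This mirrors the remark already made for $\SCU(q,s)$, where the individual step complexity is obtained from the system-step count simply by dividing by $n$.

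The only point that needs care, and hence the main obstacle, is justifying this factor-of-$1/n$ division rigorously, since the number of system steps until $p_i$ next completes is itself a random stopping time that could a priori be correlated with which processes are scheduled. I would handle this with Wald's identity: the completion of $p_i$'s operation is triggered by $p_i$'s own successful CAS, so the relevant stopping time is measurable with respect to the filtration generated by the scheduler's choices, while the per-step indicator ``this step is $p_i$'s'' has constant mean $1/n$ independent of that filtration's past. Wald's identity then gives that the expected number of $p_i$-steps equals $(1/n)\,E[W_i]$, which is $O(\sqrt{n})$. An equivalent renewal-theoretic phrasing is that over a horizon of $T$ system steps $p_i$ completes about $T/W_i$ operations while taking about $T/n$ of its own steps, so it spends $W_i/n = O(\sqrt{n})$ of its own steps per completed operation.
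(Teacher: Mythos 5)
Your proposal is correct and follows essentially the same route as the paper, which states this corollary without proof as an immediate consequence of the two preceding lemmas: the bound $W \leq 2\sqrt{n}$ on the return time of $v_1$ in the global chain, combined with the fairness relation $W_i = nW$ from the lifting, gives $O(n\sqrt{n})$, and dividing by $n$ gives the individual step bound. Your Wald's-identity justification for that division is a welcome extra degree of rigor over the paper's implicit ``since the scheduler is uniform'' argument, but it is the same idea.
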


%!TEX root = paper.tex
\section{Discussion}
This paper is motivated by the fundamental question of relating the theory of concurrent programming to real-world algorithm behavior.
We give a framework for analyzing concurrent algorithms which partially explains the wait-free behavior of lock-free algorithms, and their good performance in practice.
Our work is a first step in this direction, and opens the door to many additional questions.

In particular, we are intrigued by the goal of obtaining a realistic model for the unpredictable behavior of system schedulers. Even though it has some
foundation in empirical results, our uniform stochastic model is a rough approximation, and can probably be improved.
We believe that some of the elements of our framework (such as the
existence of liftings) could still be applied to non-uniform stochastic scheduler models, while others may need to be further developed.
A second direction for future work is studying other types of algorithms, and in particular implementations which export several distict methods. 
The class of algorithms we consider is \emph{universal}, i.e.,
covers any sequential object, however there may exist object implementations which do not fall in this class.
Finally, it would be interesting to explore whether there exist concurrent algorithms which avoid the $\Theta (\sqrt n)$ contention factor
in the latency, and whether such algorithms are efficient in practice.

%We see our analysis of the stochastic scheduler as a first step only, but even as such, we note that the only work that we are aware of that address a probabilistic scheduler for a distributed shared memory environment is that of~\cite{Aspnes2002}, which shows a fast algorithm for reaching consensus among processes that run according to a probabilistic scheduler that is different from the one assumed in this paper.

~\\
\paragraph{Acknowledgements.} We thank George Giakkoupis, William Hasenplaugh, Maurice Herlihy, and Yuval Peres for useful discussions,
and Faith Ellen for helpful comments on an earlier version of the paper. 
\bibliographystyle{plain}
\bibliography{bib}

\appendix

\paragraph{Structure of the Appendix.} 
Section~\ref{sec:empirical} presents empirical data for the stochastic scheduler model, 
while Section~\ref{sec:results} gives compares the predicted and actual performance of an algorithm in $\SCU$. 

\section{The Stochastic Scheduler Model}
\label{sec:empirical}

\subsection{Empirical Justification}

The real-world behavior of a process scheduler arises as a complex interaction of factors
such as the timing of memory requests (influenced by the algorithm),
the behavior of the cache coherence protocol (dependent on the architecture),
or thread pre-emption (depending on the operating system).
Given the extremely complex interactions between these components, the behavior of the scheduler could be seen as \emph{non-deterministic}.
However, when recorded for extended periods of time, simple patterns emerge. Figures~\ref{fig:average} and~\ref{fig:steps}
present statistics on schedule recordings from a simple concurrent counter algorithm,  executed on a system with 16 hardware threads.
(The details of the setup and experiments are presented in the next section). 

\begin{figure}
\centering
\begin{minipage}{.42\textwidth}
  \centering
  \includegraphics[scale = 0.24]{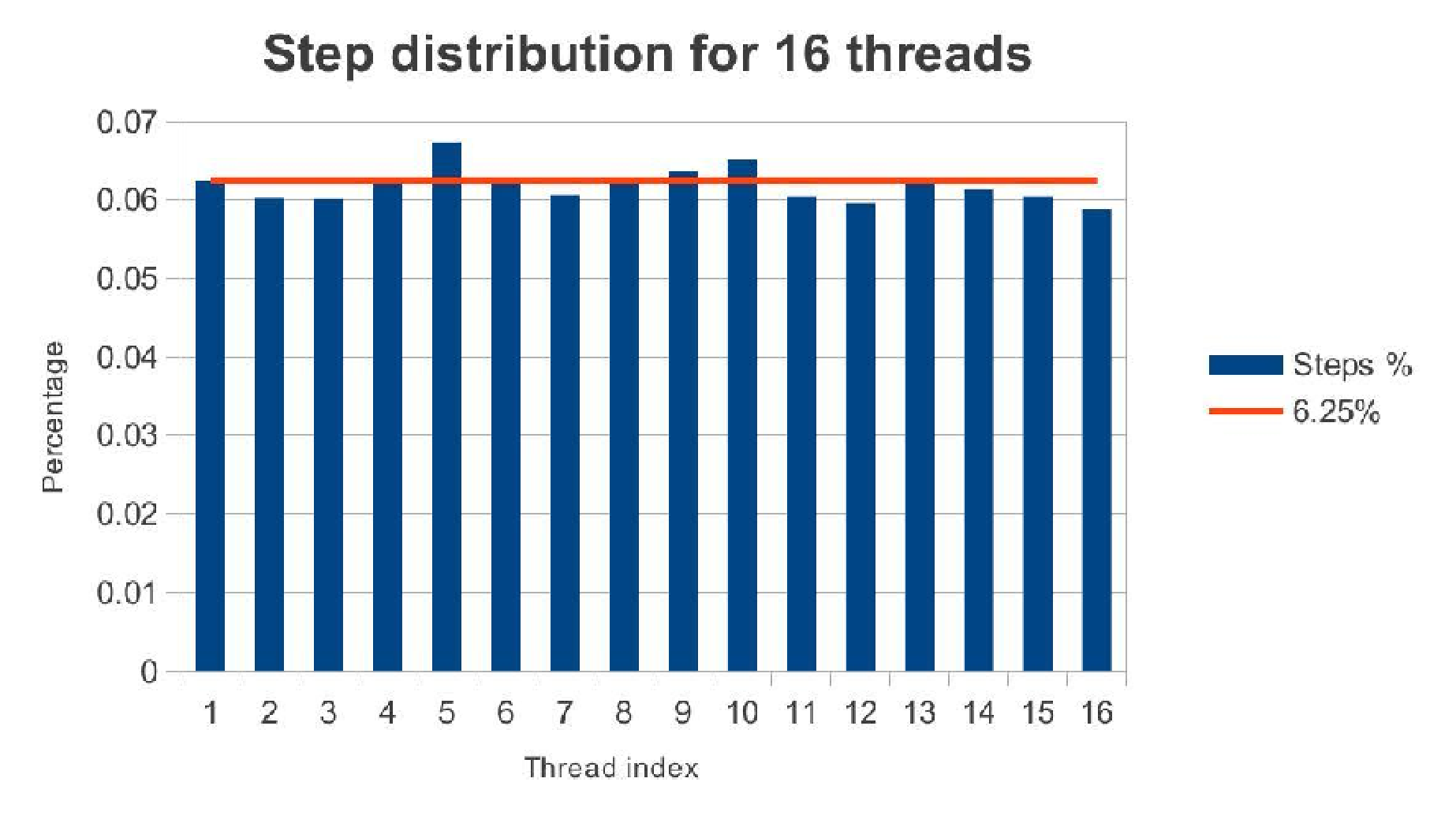}
  \captionof{figure}{Percentage of steps taken by each process during an execution.}
  \label{fig:average}
\end{minipage}%
\begin{minipage}{.42\textwidth}
  \centering
  \includegraphics[scale = 0.27]{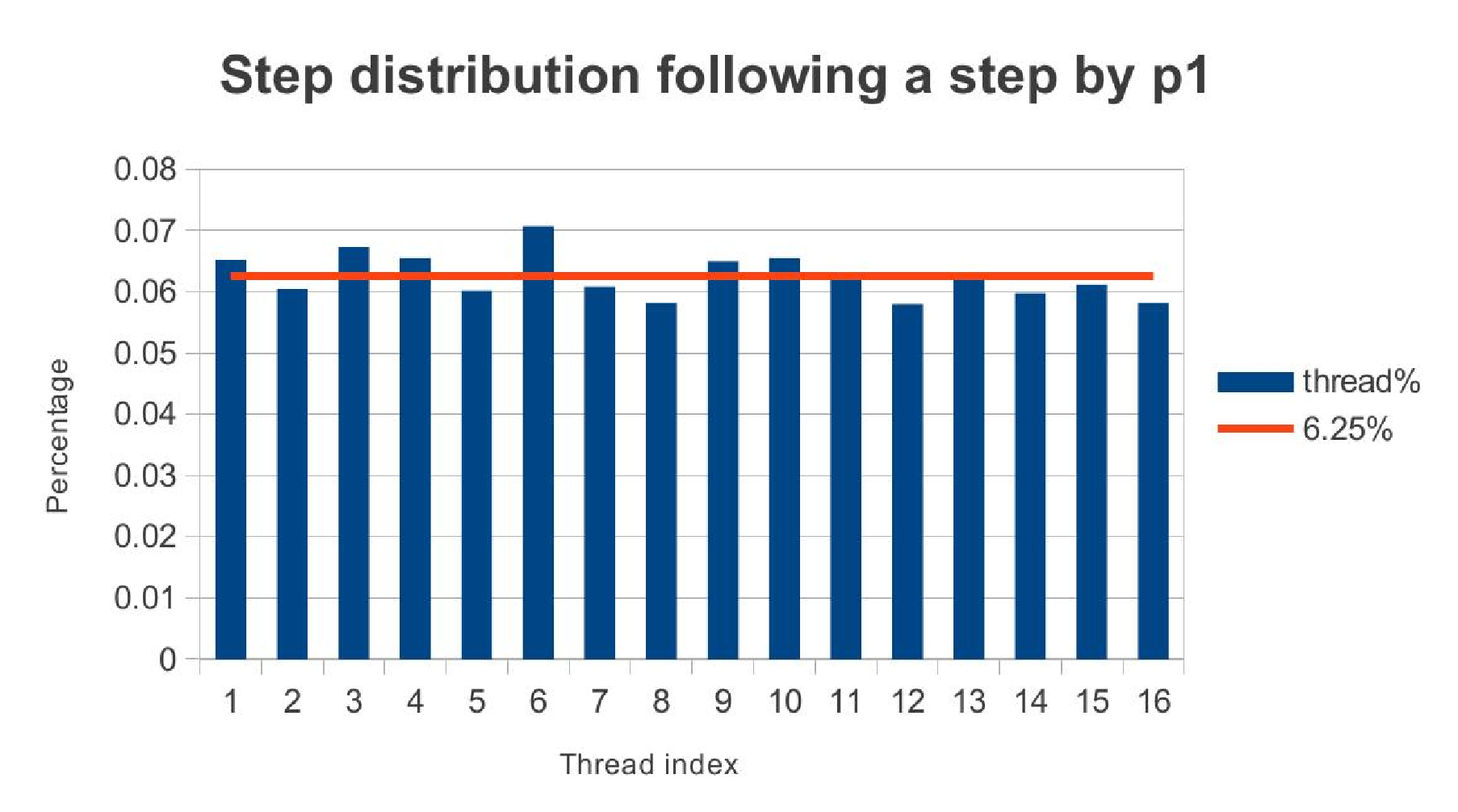}
  \captionof{figure}{Percentage of steps taken by processes, starting from a step by $p_1$. (The results are similar for all threads.)}
  \label{fig:steps}
\end{minipage}
\end{figure}

Figure~\ref{fig:average} clearly suggests that, in the long run, the scheduler is ``fair:'' each thread gets to take about the same number of steps.
Figure~\ref{fig:steps} gives an intuition about how the schedule looks like \emph{locally}: assuming process $p_i$ just took a step at time step $\tau$,
any process appears to be just as likely to be scheduled in the next step. We note that the structure of the algorithm executed can 
influence the ratios in Figure~\ref{fig:steps};  
also, we only performed tests on an Intel architecture. 

Our stochastic scheduler model addresses the non-determinism in the scheduler by associating a distribution with each scheduler time step, 
which gives the probability of each process being scheduled next.
In particular, we model our empirical observations by considering the
uniform stochastic scheduler, which assigns a probability of $1 / n$ with which each process is scheduled. 
We stress that we do not claim that the schedule behaves uniformly random locally; 
our claim is that the behavior of the schedule over long periods of time can be approximated reasonably in this way, for the algorithms we consider. 
We note that randomized schedulers attempting to explicitly implement probabilistic fairness have been proposed in practice, in the form of \emph{lottery scheduling}~\cite{Lottery}. 

\subsection{Experimental Setup}
\label{sec:exp}

The machine we use for testing is a Fujitsu PRIMERGY RX600 S6 server with four Intel Xeon
E7-4870 (Westmere EX) processors. Each processor has 10 2.40 GHz cores, each of which multiplexes two hardware threads,
so in total our system supports $80$ hardware threads. Each core has private write-back L1 and L2 caches; an inclusive
L3 cache is shared by all cores.
We limited experiments to $20$ hardware threads, in order to avoid the effects of non-uniform memory access (NUMA), which appear when hardware threads
are located on different cores.

We used two methods to record schedules. The first used an atomic fetch-and-increment operation
(available in hardware): each process repeatedly calls this operation, and records the values received.
We then sort the values of each process to recover the total order of steps.
The second method records timestamps during the execution of an algorithm, and sorts the timestamps
to recover the total order.
We found that the latter method interferes with the schedule: 
since the timer call causes a delay to the caller, a process is less likely to be scheduled twice in succession.
With this exception, the results are similar for both methods.
The statistics of the recorded schedule are summarized in Figures~\ref{fig:average} and~\ref{fig:steps}.
(The graphs are built using 20 millisecond runs, averaged over $10$ repetitions; results for longer intervals and for different thread counts are similar.)

\section{Implementation Results}
\label{sec:results}

Let the \emph{completion rate} of the algorithm be the total number of successful operations versus the total number of steps taken during the execution. 
The completion rate approximates the inverse of the system latency. 
We consider a fetch-and-increment counter implementation which simply reads the value $v$ of a shared register $R$, and then attempts to increment 
the value using a $\lit{CAS}(R, v, v + 1)$ call. 
The predicted completion rate of the algorithm is $\Theta( 1/ \sqrt n )$. The actual completion rate of the implementation is shown in Figure~\ref{fig:prediction}  
for varying thread counts, for a counter implementation based on the lock-free pattern. 
The $\Theta(1 / \sqrt n )$ rate predicted by the uniform stochastic scheduler model appears to be close to the actual completion rate. 
Since we do not have precise bounds on the constant in front of $\Theta(1 / \sqrt n)$ for the prediction, we scaled the prediction to the first data point. 
The worst-case predicted rate $(1 / n)$ is also shown.
\begin{figure}
\centering
  \includegraphics[scale = 0.4]{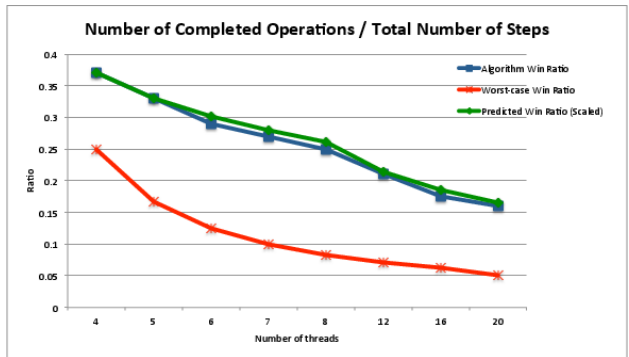}
  \captionof{figure}{Predicted completion rate of the algorithm vs. completion rate of the implementation vs. worst-case completion rate.} 
  \label{fig:prediction}
\end{figure}%

\end{document}